\let\idit=\id
\def\idrm#1{\ensuremath{\mathrm{#1}}}
\newenvironment{itemize*}%
  {\begin{itemize}%
    \setlength{\itemsep}{0pt}%
    \setlength{\parskip}{0pt}%
    \setlength{\parsep}{0pt}%
    \setlength{\topsep}{0pt}%
    \setlength{\partopsep}{0pt}%
  }%
  {\end{itemize}}%
\newcommand{\pred}{\idrm{pred}}
\newcommand{\succes}{\idrm{succ}}
\newcommand{\cc}{{\idit count}}
\newcommand{\cM}{{\cal M}}
\newcommand{\cL}{{\cal L}}
\newcommand{\eps}{\varepsilon}
\renewenvironment{proof}{\trivlist\item[]\emph{Proof}:}%
{\unskip\nobreak\hskip 1em plus 1fil\nobreak$\Box$
\parfillskip=0pt%
\endtrivlist}
\newenvironment{proofsk}{\trivlist\item[]\emph{Proof Sketch}:}%
{\unskip\nobreak\hskip 1em plus 1fil\nobreak$\Box$
\parfillskip=0pt%
\endtrivlist}
\begin{document}

\title{Data Structures for Approximate  Orthogonal Range Counting}
\author{Yakov Nekrich}
\institute{Dept. of Computer Science\\ University of Bonn\\Email {\tt yasha@cs.uni-bonn.de}}
\date{Dept. of Computer Science\\ University of Bonn\\Email {\tt yasha@cs.uni-bonn.de}}
\maketitle
\begin{abstract}

We present new data structures for approximately counting the number of 
points in an orthogonal range. 
There is a  deterministic linear space data structure that 
supports updates in $O(1)$ time 
and approximates the number of elements in a 1-D range up to an additive term 
$k^{1/c}$ in $O(\log \log U\cdot\log \log  n)$ time, where $k$ is the number 
of elements in the answer, $U$ is the size of the universe
 and $c$ is an arbitrary fixed constant. 
We can estimate the number of points in a two-dimensional 
orthogonal range 
up to an additive term $ k^{\rho}$ in $O(\log \log U+ (1/\rho)\log\log n)$ 
time for any $\rho>0$. We can estimate the number of points in a 
 three-dimensional orthogonal range 
up to an additive term  $k^{\rho}$ in
$O(\log \log U + (\log\log n)^3+ (3^v)\log\log n)$ time for 
$v=\log \frac{1}{\rho}/\log \frac{3}{2}+2$.
\end{abstract}

\section{Introduction}
Range reporting and range counting are two variants of the range
searching problem.  In the range counting problem, the data structure
returns the number of points in an arbitrary query range. In the
range reporting problem the data structure reports all points in the
query range.  Both variants were studied extensively and in many cases
we know the matching upper and lower bounds for those problems for
dimension $d\leq 4$.  Answering an orthogonal range counting query takes more time
than answering the orthogonal range reporting query in the same dimension.
 This
gap cannot be closed because of the lower bounds for the range
counting queries: while range reporting queries can be answered in
constant time in one dimension and in almost-constant time in two and
three dimensions (if the universe size is not too big)\footnote{ For
  simplicity, we consider only emptiness queries. In other words,
  we ignore the time needed to output the points in the answer: if range
  reporting data structure supports queries in $O(f(n) +k)$ time, we
  simply say that the query time is $O(f(n))$.}, range counting
queries take super-constant time in one dimension and poly-logarithmic time
in two and three dimensions.  

\tolerance=1000
\emph{Approximate range counting queries} help us bridge the gap
 between range 
reporting and counting: instead of exactly counting
the number of points (elements) in the query range, the  data
structure provides a good estimation.  
There are data structures that approximate the number of points 
in a one-dimensional interval~\cite{ABR01,M06phd} or in a 
halfspace~\cite{AH08},~\cite{KS06},~\cite{AC07},~\cite{AHS07}  up to a 
constant factor: given a query $Q$, the data structure returns the number 
$k'$ such that $(1-\eps) k \leq k' \leq (1+\eps) k$, where $k$ is the exact 
number of points in the answer and $\eps$ is an arbitrarily small positive
 constant. 
In this paper  we consider the following new variant of approximate 
range counting:
If $k$ is the number of points in the answer, the answer to a query 
$Q$ is an integer $k'$ such that  
$k-\eps k^{\alpha} \leq k' \leq k+\eps k^{\alpha}$ 
for some constant $\alpha<1$. Thus we obtain better estimation for the
 number of 
points in the answer for large (superconstant) values of $k$. 
On the other hand, if the range $Q$ is empty, then $k'=0$.
We present data structures that approximate the number of points 
in a $d$-dimensional  orthogonal range  
for $d=2,3$. We also describe a dynamic one-dimensional data structure.
\\
{\bf Dynamic 1-D Data Structure.}  A static data structure that answers 1-D
reporting queries in $O(1)$ time is described in~\cite{ABR01}.
In~\cite{ABR01} the authors also describe a static data structure that
approximates the number of points in a 1-D range up to an arbitrary
constant factor in constant time. 
P\v{a}tra\c{s}cu and Demaine~\cite{PD06} show that any dynamic data structure
 with
polylogarithmic update time needs $\Omega(\log n/\log \log U)$ time to
answer an exact range counting query; henceforth $U$ denotes the size of the 
universe. 
The dynamic randomized data structure of Mortensen~\cite{M06phd} supports 
approximate range counting 
queries in $O(1)$ time and updates in $O(\log^{\eps} U)$ time; 
see~\cite{M06phd} for other trade-offs between query and update times. 
 In this paper we present a new result on
 approximate range counting in 1-D:
\begin{itemize*}
\item There is a deterministic data structure that can answer
  one-dimensional approximate range counting queries using the best known
  data
  structure for predecessor queries, i.e.  dynamic data
  structure supports range reporting queries in $O(dpred(n,U))$ time,
  where $dpred(n,U)$ is the time to answer a predecessor query in the
  dynamic setting; currently $dpred(n,U)=O(\min(\log \log U\cdot \log
  \log n, \sqrt{\log n/\log \log n}))$~\cite{AT07}. We show that we can
  approximate the number of points in the query range up to an
  additive factor $ k^{1/c}$, where $k$ is the number of points in the
  answer and $c$ is an arbitrary  constant, in $O(dpred(n,U))$
  time. We thus significantly improve the precision of the estimation; 
  the query time is still much less than the lower bound for the 
  exact counting queries in the dynamic scenario. 
\end{itemize*}
Using the standard techniques, we can extend the results for
one-dimensional approximate range counting to an arbitrary constant
dimension $d$.  There is a data structure that approximates the number
of points in a $d$-dimensional range up to an additive term $k^c$ for
any $c>0$ in $O(\log \log n (\log n/\log \log n)^{d-1} )$ time and
supports updates in $O(\log^{d-1+\eps} n)$ time.  For comparison, the fastest
 known dynamic data structure~\cite{M06} supports emptiness 
queries in $O((\log n/\log \log n)^{d-1})$ time. 
Dynamic data structures are described in section~\ref{sec:1d}.
\\{\bf Approximate Range Counting in 2-D and 3-D.}
We match or almost match the best upper bounds for 2-D and 3-D emptiness 
queries. Best data structures for exact range counting in 2-D and 3-D 
support queries in $O(\log n/\log \log n)$ and $O((\log n/\log \log n)^2)$ 
time respectively~\cite{JMS04}. 
\begin{itemize*}
\item If all point coordinates do not exceed $n$, we can approximate 
  the number of points in a two-dimensional query
  rectangle up to an additive term $ k^{\rho}$ for an
   arbitrary parameter $\rho$, $ 0 < \rho <1$, in
   $O(  (1/\rho)\log\log n)$ time. 
\item If all point coordinates do not exceed $n$, we can approximate 
  the number of points in three-dimensional query
  rectangle up to an additive term $ k^{\rho}$ in $O((\log \log n)^3 +
  (3^{v})\log\log n)$
  time for an arbitrary parameter $\rho$, $0 < \rho <1$, and 
  $v=\log \frac{1}{\rho}/\log \frac{3}{2}+2$. 
\end{itemize*}
The parameter $\rho$ is not fixed in advance, i.e. the same data structures 
can be used for answering queries with arbitrary precision. If point 
coordinates are 
arbitrary integers, then the query time of the above data structures increases 
by an additive term $O(\min (\log \log U, \sqrt{\log n/\log \log n}))$.
Data structure for range counting in 2-D and 3-D are described in section~\ref{sec:23d}. In section~\ref{sec:space} we describe space-efficient variants 
of two- and three-dimensional data structures that estimate the number 
of points in a range up to an additive error $k^c$ for some fixed constant 
$c$.

Our results for approximate range counting queries are valid in the word  RAM 
model. Throughout this paper $\eps$ denotes an arbitrarily small constant.

\section{Dynamic Approximate Range Counting}
\label{sec:1d}
\tolerance=1000
We  show that in the dynamic scenario 
answering one-dimensional counting queries with an additive error 
$k^{1/c}$ can be performed as efficiently as answering predecessor queries. 
The best known   deterministic data structure supports one-dimensional 
emptiness queries in $O(dpred(n, U))$ time, where 
$dpred(n,U)=\min(\sqrt{\log n /\log \log n}, \log \log U \cdot \log \log n)$ is the time needed to answer a predecessor query in dynamic scenario~\cite{A96},~\cite{AT07}. 
\begin{theorem}\label{theor:dyn1d2}
For any fixed constant $c>1$, there exists a linear space data structure that supports approximate range 
counting queries  with additive error $k^{1/c}$ 
in $O(dpred(n,U))$ time, deletions  in $O(\log \log n)$ amortized  time, and 
insertions in $O(dpred(n,U))$ amortized  time. 
\end{theorem}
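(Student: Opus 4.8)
The plan is to turn an approximate range count into an approximate \emph{rank difference} and to keep ranks only implicitly, so as to get around the $\Omega(\log n/\log\log U)$ lower bound for exact dynamic counting. Keep the point set in a dynamic predecessor dictionary attaining the bound $dpred(n,U)$ (an exponential search tree / $y$-fast trie). Given a query $[a,b]$, one predecessor and one successor query produce the smallest stored point $p_i\ge a$ and the largest stored point $p_j\le b$ in $O(dpred(n,U))$ time, so it remains to estimate $j-i+1$.

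For this I would build a rooted tree $\mathcal T$ over rank space: its leaves are the points in sorted order, the subtree of a node spans a contiguous range of ranks, and $\mathcal T$ is kept weight balanced (weight-balanced $B$-tree style) so that a node at level $\ell$ holds $\Theta(N_\ell)$ points with $N_{\ell+1}\approx\sqrt{N_\ell}$, $N_0=n$, hence $\mathcal T$ has $O(\log\log n)$ levels; the recursion stops at ``bottom groups'' of $\Theta(\log\log n)$ points, stored in sorted order together with the exact within-group index of each point. Every internal node stores the \emph{exact} number of points in its subtree and, over its children, a structure answering \emph{approximate} range-sum queries on the sequence of children's subtree sizes (``how many points lie in children $x,\dots,y$''); this auxiliary structure is itself a (weighted) instance of the same problem and is built recursively --- and it is the one place approximation enters, since maintaining exact prefix sums of the children's sizes would cost super-constant time per update. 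Each point also carries pointers to its $O(\log\log n)$ ancestors in $\mathcal T$, and a geometric-sum argument keeps the total space $O(n)$.

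To answer a query, walk up from $p_i$ and from $p_j$ along the ancestor pointers until the paths meet at their lowest common ancestor $v$, say at level $\ell^*$, with $p_i$ in child $x$ and $p_j$ in child $y\ge x$ of $v$, and write $j-i+1=\kappa_i+M+\kappa_j$, where $M$ is the number of points in children $x+1,\dots,y-1$ of $v$ (read approximately from $v$'s auxiliary structure) and $\kappa_i,\kappa_j$ are a ``suffix'' count inside child $x$ and a ``prefix'' count inside child $y$, obtained by recursing one level down --- a recursion that does not branch, since a suffix/prefix query spawns one suffix/prefix subcall plus one more approximate range-sum lookup, and bottoms out at a bottom group where the count is read off exactly from the stored indices. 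The key quantitative point is that every approximate range-sum value $s$ used along the way satisfies $s\le j-i+1=k$ (that block of children lies inside the query range); hence if each such value is guaranteed up to additive error $s^{1/C}$ for a constant $C$ slightly larger than $c$, every error term is $\le k^{1/C}$, and summing the $O(\log\log n)$ terms yields error $\le(\log\log n)\,k^{1/C}\le k^{1/c}$ provided $k$ exceeds a $\mathrm{poly}(\log\log n)$ threshold; for $k$ below that threshold a dedicated exact substructure (exact partial sums over the few relevant entries) is used, which is affordable precisely because $k$ is that small. With the two initial predecessor queries this gives $O(dpred(n,U)+\log\log n)=O(dpred(n,U))$ query time, as $dpred(n,U)=\Omega(\log\log n)$. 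Insertions do one predecessor query, then $O(\log\log n)$ count and auxiliary-structure updates and an $O(\log\log n)$-cost fix of a bottom group, plus weight-rebalancing whose node rebuilds amortize to $O(1)$ per level, for $O(dpred(n,U))$ amortized; deletions are lazy (mark, decrement $O(\log\log n)$ counts, fix a bottom group, rebuild a subtree when a constant fraction is stale), for $O(\log\log n)$ amortized.

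The part I expect to be the main obstacle is exactly this error analysis: turning the naive accumulation into a clean $k^{1/c}$ bound means tracking the error through the $O(\log\log n)$ levels of $\mathcal T$ and through the nesting of the auxiliary range-sum structures (controlling any degradation of the error exponent with nesting depth and fixing $C$ accordingly) and --- the fiddly part --- ensuring that when $k$ is small the approximate sums actually encountered are empty or are served by the exact substructure, so that the total error is genuinely $\le k^{1/c}$ even when $k^{1/c}$ is $O(1)$. The other ingredients --- weight-balanced rebuilding, amortizing lazy deletion, and the $O(n)$ space bound --- are routine.
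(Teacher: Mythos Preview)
Your high-level plan coincides with the paper's: do two predecessor/successor queries, keep the points in an exponential-tree-style structure of depth $O(\log\log n)$, walk up to the lowest common ancestor summing approximate child-range counts along the two paths, and handle the small-$k$ regime by a separate exact structure on groups of size $\mathrm{poly}(\log\log n)$. The error bookkeeping---bound each of the $O(\log\log n)$ contributions by $k^{1/C}$, absorb the $\log\log n$ factor into the exponent using the small-$k$ cutoff, then adjust $C$---is also the same.

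Where you diverge is in the per-node auxiliary structure, and this is exactly where the paper is simpler. You take node degree $\approx\sqrt{n_v}$ and propose a \emph{recursive} approximate range-sum structure over the children's sizes; you correctly flag the resulting nested error analysis as the main obstacle, and you note (only implicitly) that this recursive instance is a \emph{weighted} range-sum problem, hence not literally ``the same problem''. The paper sidesteps all of this: it chooses degree $\Theta(n_v^{1/c})$ (assuming $c>2$) and in each node simply \emph{tabulates} all $O(n_v^{2/c})$ values $c_v(i,j)\approx\sum_{t=i}^{j} n_{v_t}$, rebuilding the table from scratch whenever $n_v^{3/c}/2$ updates have touched the subtree. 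The rebuild cost $O(n_v^{2/c})$ amortizes to $o(1)$ per update per node, the table is sublinear so total space stays $O(n)$, and each lookup has additive error at most $n_v^{3/c}$, which---since any nonempty child range at $v$ already contains at least $n_v^{(c-1)/c}$ points---is at most $k_v^{3/(c-1)}$. No nesting, no weighted subproblem, no degradation of the exponent; the constant $c$ in the degree is the only knob, and at the end one replaces $c$ by $5c$. So the obstacle you anticipate is real for your parameter choice but is an artifact of it: with the smaller degree the auxiliary structure collapses to a flat table and the obstacle disappears.
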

\begin{proof}
First we observe that if the query interval contains less 
than $(\log \log n)^c$ points for an arbitrary constant $c$, 
$k=|P\cap [a,b]| \leq (\log \log n)^c$, 
then we can use a simple modification of the standard binary tree solution: 
the set $P$ is divided into groups of $(\log \log n)^c$ consecutive 
elements, i.e., $|G_i|=(\log \log n)^c$ and every element in $G_i$ 
is smaller than any element in $G_{i+1}$. Using a dynamic  data structure 
for predecessor queries we can find in $O((dpred(n,U))$ time 
the successor $a'$ of $a$ in $P$ and the predecessor $b'$  of $b$ in $P$. 
If $a$ and $b$ belong to the same group $G_i$, then we can count 
 elements in $[a,b]$ in $O(\log \log \log n)$ time using the 
standard binary range tree solution. 
If $a'$ and $b'$ belong to two consecutive groups $G_i$ and $G_{i+1}$, 
then we count the number of elements $e\in G_i$, $e\geq a$, 
and the number of elements $e'\in G_{i+1}$, $e'\leq b$. 
If $a'$ belongs to a group $G_i$ and $b'$ belongs to a group 
$G_j$ so that  $j>i+1$, then $[a,b]$ contains more than $(\log \log n)^c$ 
elements.  We also assume w.l.o.g. that $c>2$.


We maintain the exponential tree~\cite{A96},~\cite{AT07} 
for the set $P$. 
The root node has $\Theta(n^{1/c})$ children, so that each child 
node contains between $n^{(c-1)/c}/2$ and $2n^{(c-1)/c}$ points from $P$. 
In a general case, if a node $v$ contains $n_v$ points of $P$, then 
node $v$ has $\Theta(n_v^{1/c})$ children, so that each child contains 
between $n_v^{(c-1)/c}/2$ and $2n_v^{(c-1)/c}$ points from $P$.
The exponential tree can be maintained as described in~\cite{A96}, so that 
insertions and deletions  are supported in $O(\log \log n)$ time. 
Additionally in every node $v$ we store  the approximate number of 
elements in any consecutive sequence of children of $v$, denoted by $c_v(i,j)$: 
for any $i<j$, $n_{v_i}+n_{v_{i+1}}+\ldots+n_{v_j} -n_v^{3/c}/2\leq c_v(i,j)
\leq n_{v_i}+n_{v_{i+1}}+\ldots+n_{v_j} + n_v^{3/c}/2$. 
When $n_v^{3/c}/2$ elements are inserted into a node $v$ or deleted 
from $v$, we set $c_{v}(i,j)=n_{v_i}+n_{v_{i+1}}+\ldots+n_{v_j}$ for all $i<j$.
Recomputing $c_{v}(i,j)$ for a node $v$ takes $O(n_v^{2/c})$ time. 
Since insertion or deletion results in incrementing or decrementing  the 
value of $n_v$ in $O(\log \log n)$ nodes $v$,
recomputing $c_v(i,j)$  incurs an amortized cost $O(\log \log n)$.
Thus amortized cost of a delete operation is $O(\log \log n)$.
When we insert a new point, we also have to find its position in the 
exponential tree; therefore an insertion takes $O(dpred(n,U))$ time. 

We store $O(n_v^{2/c})$ auxiliary values in each node $v$; hence, we can 
show that the space usage is $O(n)$ in exactly the same way as 
in~\cite{A96,AT07}.

Given an interval $[a,b]$, we find $b'=\pred(b,P)$ and $a'=\succes(a,P)$
and identify the leaves of the exponential tree in which they are stored. 
The lowest common ancestor $q$ 
of those leaves can be found in $O(\log \log n)$ 
time because the height of the tree is $O(\log \log n)$.
If $a'$ and $b'$ are stored in the $i$-th and the $j$-th children of $q$ 
and $i+1<j$, then all elements stored in $q_{i+1},\ldots,q_{j-1}$ belong to 
$[a,b]$ and we initialize a variable $\cc$ to $c_v(i+1,j-1)$. Otherwise 
$\cc$ is set to $0$. Then, we traverse the path 
from $q$ to $a'$ and in every visited node $v$ we increment 
$\cc$ by  $c_v(i_v+1,r_v)$, such that $a'$ is in the $i_v$-th child of $v$, 
and $r_v$ is the total number of $v$'s children. 
Finally, we traverse the path from $q$ to $b'$ and in every visited node $v$ we increment 
$\cc$ by  $c_v(1,i_v-1)$, such that $b'$ is in the $i_v$-th child of $v$, 
Suppose that the variable $\cc$ was incremented by  $s_v>0$ when a node $v$
 was visited. Let $k_v$ be the exact number of elements in all children 
of $v$ whose ranges are entirely contained in $v$. Then,  
$k_v -n_v^{3/c} \leq s_v \leq k_v + n_v^{3/c}$.  Since 
$ k_v\ \geq n_v^{(c-1)/c}$,
 $k_v -k_v^{3/(c-1)} \leq s_v \leq k_v + k_v^{3/(c-1)}$.
Clearly, the total number of points equals to the sum of $k_v$ for all 
visited nodes $v$. The search procedure visits less than 
$c_h \log \log n$ nodes for a constant $c_h$. 
Hence, $k-k^{3/(c-1)}\log \log n \leq \cc \leq k+k^{3/(c-1)}\log \log n$ for $k=|P\cap [a,b]|$. 
Since $\log \log n \leq k^{1/(c-1)}$, 
$k-k^{4/(c-1)} \leq \cc \leq k+k^{4/(c-1)}$. 
We obtain the result of the Theorem by replacing $c$ with  $c'=\max(5c,5)$ 
in the above proof. 
\end{proof}
Our dynamic data structure can be extended to 
$d$ dimensions using the standard range tree~\cite{B80}. 
\begin{theorem}\label{theor:dynmd}
For any fixed constant $c>1$, there exists a data structure that supports $d$-dimensional 
 approximate range counting queries with additive error $k^{1/c}$  
in $O(\log\log n (\log n /\log \log n)^{d-1})$ time and updates in 
$O(\log^{d-1+\eps}n)$ amortized time.\\
\end{theorem}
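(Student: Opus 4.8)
The plan is to lift Theorem~\ref{theor:dyn1d2} to $d$ dimensions by means of the range tree of Bentley~\cite{B80}, with three modifications that keep both the additive error and the per-dimension query cost under control.

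\emph{The structure.} On each of the first $d-1$ coordinates we use a weight-balanced tree of branching factor $B=\ceil{\log^{\eps} n}$; a node $v$ of such a tree carries a $(d-1)$-dimensional structure built recursively on the points of its subtree and, in addition, the exact sizes of its children's subtrees together with their prefix sums, so that the number of points in any contiguous run of children of $v$ is available in $O(1)$ time. The innermost (coordinate $d$) structures are the dynamic one-dimensional structures of Theorem~\ref{theor:dyn1d2}, built with a parameter $C$ -- to be fixed at the end -- in place of $c$. Finally, we thread fractional cascading through all $d$ levels, so that a single predecessor search at the top level locates the query endpoints in every structure the query visits. We dynamize by the usual partial rebuilding of the weight-balanced trees: an update changes $O(\log^{d-1}n)$ of the secondary structures, and, charging rebalancing and the re-threading of the cascading pointers, each affected structure is touched at amortized cost $O(\log^{\eps}n)$, for amortized update time $O(\log^{d-1+\eps}n)$.

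\emph{The query and its error.} Decomposing $Q=[a_1,b_1]\times\cdots\times[a_d,b_d]$ level by level, the search on coordinate $i\le d-1$ produces $O(\log_{B}n)=O(\log n/\log\log n)$ canonical pieces -- one per node on the two search paths, the stored prefix sums collapsing the up-to-$B$ full children of a node into a single piece -- so after $d-1$ levels we are left with $m=O((\log n/\log\log n)^{d-1})$ one-dimensional instances, pairwise disjoint and with point counts $k_1,k_2,\dots$ summing to $k$. Since their endpoints are already known from the single top-level predecessor search, each is answered by the procedure of Theorem~\ref{theor:dyn1d2} in $O(\log\log n)$ amortized time, for a total of $O(\log\log n\,(\log n/\log\log n)^{d-1})$; for arbitrary integer coordinates one extra $dpred(n,U)$ term is paid once. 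Only the bottom-level structures introduce error: piece $j$ contributes at most $k_j^{1/C}$ (and nothing once $k_j$ falls below the internal threshold of Theorem~\ref{theor:dyn1d2}), so by concavity of $x\mapsto x^{1/C}$ the total additive error is at most $m^{1-1/C}k^{1/C}\le(\log n)^{A}k^{1/C}$ for a constant $A=A(d)$ and all large $n$.

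\emph{Reconciling.} The bound $(\log n)^{A}k^{1/C}$ is at most $k^{1/c}$ once $k$ exceeds a fixed power of $\log n$; queries whose answer is no larger than that power of $\log n$ are handled by a separate, more precise routine -- a capped range-reporting query in the same tree, in the spirit of the bounded binary-tree count used for small $k$ in the proof of Theorem~\ref{theor:dyn1d2}, which returns its exact count whenever reporting stops before the cap -- and choosing $C$ large enough relative to $c$ and $d$ makes all of the estimates fit, which gives the theorem. The main obstacle is precisely this last step: the naive range-tree decomposition amplifies the one-dimensional error by a polylogarithmic factor, and the resulting small-$k$ regime it creates must be absorbed without exceeding the $O(\log\log n\,(\log n/\log\log n)^{d-1})$ budget, so the interplay between the thresholds, the precision parameter $C$, and the auxiliary subroutine has to be set up with care; the dynamic fractional cascading is routine but also needs attention in the bookkeeping.
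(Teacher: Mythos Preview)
Your high-level plan---a $B$-ary range tree with $B=\Theta(\log^{\eps'}n)$ on the first $d-1$ coordinates, carrying the one-dimensional structure of Theorem~\ref{theor:dyn1d2} at the bottom---is exactly what the paper does; its own proof is a single sentence (``combine the standard range tree technique with node degree $O(\log^{\eps'}n)$ and Theorem~\ref{theor:dyn1d2}; details in the full version''), so your treatment of fractional cascading, the concavity error bound, and the small-$k$ regime already goes well beyond what the paper spells out.

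There is, however, one step that does not work as written. You assert that each of the first $d-1$ coordinate levels contributes only $O(\log_B n)$ canonical pieces because ``the stored prefix sums collapse the up-to-$B$ full children of a node into a single piece''. Prefix sums of child subtree \emph{sizes} are scalars; they tell you how many points lie in a run of children, but not how many of those points satisfy the remaining $(d-1)$-dimensional constraint $Q'$. With only one secondary structure per node (built on all of its subtree's points, as you describe), a path node forces you to recurse into each of its $O(B)$ full children separately, so the number of bottom-level 1-D instances is $O((B\log_B n)^{d-1})$, not $O((\log_B n)^{d-1})$. Both your time bound and your concavity-based error bound then acquire an extra $\log^{\Theta(\eps)}n$ factor. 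The usual remedy---and presumably what the paper's ``standard technique'' hides---is to store at each node a secondary structure for every contiguous run of children ($O(B^2)$ structures per node, the multislab idea); then the full children at a path node are served by a single recursive call, the bottom-level pieces stay disjoint with counts summing to $k$, and your concavity argument goes through. Since the theorem imposes no space bound and $B=\log^{\eps'}n$, the extra $B^{O(1)}$ factors are absorbed into the $O(\log^{d-1+\eps}n)$ update cost by shrinking $\eps'$.
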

\begin{proof}
This result can be obtained by combining the standard range tree technique 
(node degree in a range tree is $O(\log^{\eps'} n)$ for an appropriate 
constant $\eps'=\eps/(d-1)$) with the data structure for 
 one-dimensional approximate range counting of Theorem~\ref{theor:dyn1d2}.
Details will be given in the full version of this paper.
\end{proof}
\section{Approximate Range Counting in 2-D and 3-D}
\label{sec:23d}
 A point $p$ dominates a point $q$ if each coordinate of $p$ is greater than 
or equal to the corresponding coordinate of $q$. 
The goal of the (approximate) dominance counting query is to (approximately) count the number of points in $P$ that dominate $q$. 
The dominance query is equivalent to the orthogonal range query with a restriction that 
query range $Q$ is a product of half-open intervals. 
We start this section with a description of the data structure that estimates 
the number of points 
in the answer to a 2-D dominance query up to a constant factor. We can obtain 
a data structure for general orthogonal range counting queries using a standard technique. Then, we show that 
queries can be answered with higher precision without increasing the query time. Finally, we describe a data structure for approximate range counting in 3-D.
For simplicity, we only consider the case when all point coordinates 
are bounded by $n$. We can obtain the results for the case of arbitrarily 
large point coordinates by a standard reduction to rank space
 technique~\cite{GBT84}: the space usage remains linear and the query time 
increases by $pred(n,U)$ - the time needed to answer a static predecessor 
query. 
\begin{theorem}
\label{theor:domin2d1}
There exists a linear space data structure that answers approximate 
 two-dimensional 
dominance  range counting  queries on $n\times n$ grid in 
$O(\log \log n)$ time. 
\end{theorem}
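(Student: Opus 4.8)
The plan is to build $O(\log n)$ thinned copies of $P$ at geometrically decreasing densities and to answer a query by locating, in a single $O(\log\log n)$-time search, the copy in which the dominance range contains only $\Theta(1)$ points; multiplying that count by the copy's density yields a constant-factor estimate.

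First I would perturb the coordinates so that we may assume $P$ is a set of $n$ points with pairwise distinct coordinates on $[n]\times[n]$ (this stays on an $O(n)\times O(n)$ grid, and in the general-universe version one pays the stated additive $pred(n,U)$ term for rank-space reduction). Then, for $j=0,1,\dots,\lceil\log_2 n\rceil$, I would construct a nested chain $P=P_0\supseteq P_1\supseteq\cdots$ with $|P_j|=\Theta(n/2^{j})$ that is a two-sided relative approximation for the (bounded VC-dimension) family of dominance ranges: for every corner $q$, if the range dominating $q$ contains $k\geq 2^{j}$ points of $P$ then it contains between $c_1k/2^{j}$ and $c_2 k/2^{j}$ points of $P_j$, for absolute constants $0<c_1\le c_2$. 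Such a chain can be produced deterministically by repeatedly halving the current set with a known $\eps$-approximation construction for range spaces of constant VC-dimension, and it satisfies $\sum_j|P_j|=O(n)$; equipping each $P_j$ with a linear-space grid structure for dominance emptiness and reporting then keeps the total space $O(n)$.

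For a query corner $q$, let $j^{\ast}$ be the largest $j$ with $Q\cap P_j\ne\emptyset$; this predicate is monotone in $j$, and by the approximation property $2^{j^{\ast}}=\Theta(k)$ and $|Q\cap P_{j^{\ast}}|=O(1)$, so once $j^{\ast}$ is known a single dominance \emph{reporting} query on $P_{j^{\ast}}$ (output size $O(1)$, hence $O(\log\log n)$ on the grid) recovers an estimate within a constant factor of $k$, and $k'=0$ is returned exactly when $Q\cap P_0=\emptyset$. The delicate part is pinning down $j^{\ast}$ in $O(\log\log n)$ total rather than the $O((\log\log n)^{2})$ one gets by testing the $O(\log n)$ candidate layers individually with an $O(\log\log n)$ emptiness query each. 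I would do this by merging the per-layer search structures into one recursive structure over the $x$-axis of van-Emde-Boas type: descending it performs, at each of its $O(\log\log n)$ levels, one predecessor/selection step on a universe that shrinks by a square root, while carrying the ``deepest nonempty layer seen so far'', so that the bottom of the descent outputs both $j^{\ast}$ and a witness point of $Q\cap P_{j^{\ast}}$.

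I expect this fusion to be the main obstacle: one must arrange the $x$-axis search so that the ``which layer is still nonempty'' question and the ``where in that layer'' point location are answered by a single descent whose cost telescopes to $O(\log\log n)$, instead of $\Theta(\log\log n)$ independent predecessor searches. A secondary technical point is the deterministic construction of the chain $P_0\supseteq P_1\supseteq\cdots$ with the two-sided guarantee together with the matching linear-space reporting structures; once those are in place, summing space over the layers and charging the query time to the depth of the recursion finishes the proof, and the general-orthogonal, higher-precision, and $3$-D bounds stated later are obtained by applying the same construction within a constant or $O(1/\rho)$ number of coordinate-reduction steps.
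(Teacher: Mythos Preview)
Your high-level plan---$O(\log n)$ geometrically spaced levels together with a binary search for the right one---is exactly the skeleton the paper uses, but the paper instantiates the levels differently and, crucially, gives a concrete $O(\log\log n)$ search that your proposal leaves as an acknowledged obstacle. Instead of $\eps$-approximation samples, the paper takes as level $j$ the $t_{\alpha}$-approximate boundary $\cM_j$ of Vengroff--Vitter with $t=\alpha^{j}$: a monotone staircase, every point of which is dominated by between $\alpha^{j}$ and $\alpha^{j+1}$ points of $P$, having $O(n/\alpha^{j})$ inward corners. For a query $q=(q_x,q_y)$ the predicate ``$q$ dominates $\cM_j$'' reduces to the single comparison $q_y\ge c_y$, where $c=\pred_x(q_x,\cM_j)$ is the inward corner of $\cM_j$ whose $x$-coordinate precedes $q_x$; the minimal such $j$ pins $k$ down to within a factor $\alpha^{2}$. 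To find this $j$ in $O(\log\log n)$ time the paper buckets the $x$-axis into intervals $I_s$ of length $\log n$ and stores, for each $I_s$, (i) the list $L_s$ of boundary indices having an inward corner inside $I_s$, and (ii) the list $A_s=\{\pred_x(a_s,\cM_j):j=1,\dots,\log_{\alpha}n\}$ at the left endpoint $a_s$. One binary search among the $y$-coordinates in $A_s$ and one binary search in $L_s$ (each test $O(1)$ because the relevant $x$-coordinates lie in a window of size $\log n$ and fit in a word) give the answer; total space is $O(n)$ since $\sum_j|\cM_j|=O(n)$ and there are $n/\log n$ buckets each carrying $O(\log n)$ precomputed corners.

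Against this, your sketch has two soft spots. The serious one is the fused van-Emde-Boas descent: at each vEB level you must resolve not just an $x$-predecessor but, simultaneously across all layers, whether the associated $y$-threshold is met, and ``carrying the deepest nonempty layer seen so far'' does not explain how a single $O(1)$ step per vEB level suffices; without an idea like the bucketing above you are stuck at $O((\log\log n)^{2})$. The minor one is the sample chain: repeated halving with a standard (additive) $\eps$-approximation gives $|Q\cap P_{j+1}|=\tfrac12|Q\cap P_j|\pm\eps|P_j|/2$, and after $j$ steps the accumulated error need not yield the two-sided relative bound $c_1k/2^{j}\le|Q\cap P_j|\le c_2k/2^{j}$ uniformly for $k\ge 2^{j}$; you would need relative $(p,\eps)$-approximations, whereas the $t$-boundary is an exact combinatorial object that sidesteps this entirely.
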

A $t$-approximate boundary, introduced by Vengroff and Vitter~\cite{VV96} 
is a polyline $\cM$ 
consisting of $O(n/t)$ axis-parallel segments that partitions 
the space\footnote{In this section we assume that all 
points have positive coordinates}, so that  
every point $\cM$ is dominated by at most $2t$ and at least $t$ points 
of $P$. 
This notion can be straightforwardly extended to a $t_{\alpha}$-boundary
$\cM_{\alpha}$: $\cM_{\alpha}$ partitions the space into two parts,
and every point $\cM_{\alpha}$ is dominated by at most $\alpha\cdot t$ and 
at least $t$ points of $P$.
We can construct a $t_{\alpha}$-boundary with the same algorithm 
as in~\cite{VV96}. Let $p$ be a point with coordinates $(0,0)$. We move 
$p$ in the positive $x$ direction until $p$ is dominated by at most 
$\alpha t$ points. Then, we repeat the following steps until the 
$x$-coordinate of $p$ equals to $0$: a) move $p$ in 
$+y$ direction as long as $p$ is dominated by more than $t$ points of $P$
b) move $p$ in the $-x$ direction until $p$ is dominated by $\alpha t$ 
points of $P$.  
The path traced by $p$ is a $t_{\alpha}$-boundary; see Fig.~\ref{fig:bound2d} 
for an example. 
\emph{Inward corners} are formed when we move $p$ in $+y$ direction, 
i.e. inward corners mark the beginning of step a) resp.\ the end 
of step b). 
Inward corners of $\cM$ have a property  that no point of 
$\cM$   is strictly dominated by an inward corner and for every point 
$m\in \cM$ that 
is not an inward corner, there is an inward corner $m_i$ dominated by $m$.
There are $O(n/t)$ inward corners in a $t_{\alpha}$-approximate boundary 
because for every inward corner $c=(c_x,c_y)$ there are $(\alpha-1)t$ points 
that dominate $c$ and do not dominate inward corners whose
 $x$-coordinates are larger than $c_x$.
 \begin{figure}[tbp]
   \centering
   \includegraphics[width=70mm]{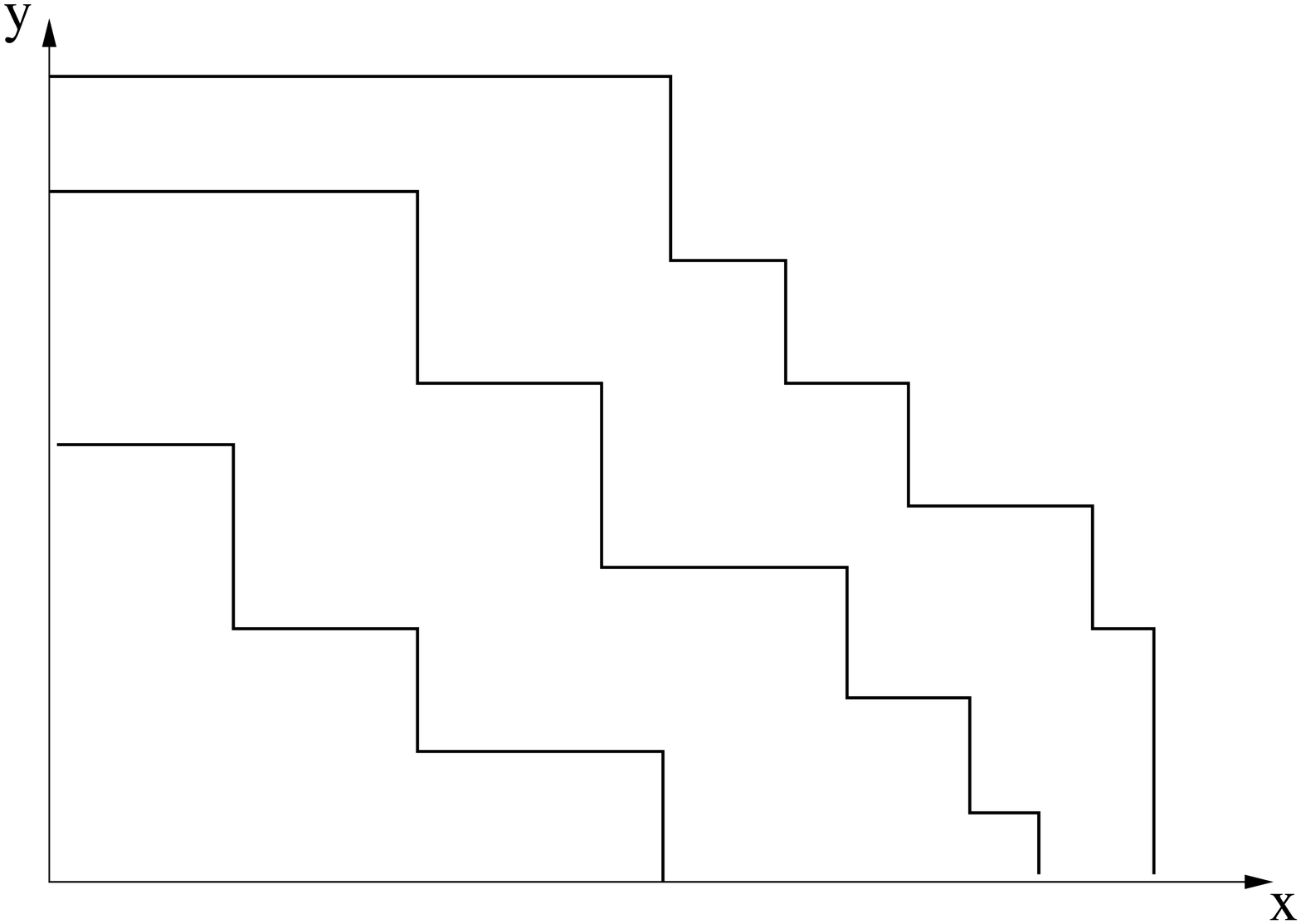}
   \caption{Example of $t$-approximate boundaries in 2-D. For simplicity, 
     the points of the set $P$ are not shown.}
   \label{fig:bound2d}
 \end{figure}
Our data structure consists of $\log_{\alpha} n$ $t_{\alpha}$-approximate 
boundaries $\cM_1,\cM_2,\ldots, \cM_s$ such that $\cM_i$ is an 
$\alpha^i$-approximate boundary of $P$, i.e. every point 
on $\cM_i$ is dominated by at least $\alpha^i$ and at most $\alpha^{i+1}$ 
points of $P$. If a point $p\in \cM_i$ is dominated by a query point $q$, 
then $q$ is dominated by at most $\alpha^{i+1}$ points of $P$. 
If $q$ dominates a point on $\cM_i$, then it also dominates an inward 
corner of $\cM_i$. Hence, we can estimate the number of points 
that dominate $q$ up to a constant $\alpha$ by finding the minimal index 
$j$ such that $q$ dominates an inward corner of $\cM_j$. 
Since $q$ is dominated by a point of $\cM_{j-1}$, $q$ is dominated by 
$k\geq \alpha^{j-1}$ points of $P$. On the other hand, $k\leq \alpha^{j+1}$ 
because a point of $\cM_j$ is dominated by $q$. 

We can store inward corners of all 
boundaries $\cM_i$ in a linear space data structure so that 
for any point $q$ the minimal index $j$, such that some point on $\cM_j$ 
is dominated 
by $q$, can be found in $O(\log \log n)$ time. 
 We denote by $\pred_x(a,S)$ the point 
$p=(p_x,p_y)\in S$, such that $p_x=\pred(a,S_x)$ where $S_x$ is the 
set of $x$-coordinates of all points in $S$.  
For simplicity, we sometimes do not distinguish between a boundary $\cM_i$ 
and the set of its inward corners.
 Let $q=(q_x,q_y)$.
Let $c_i=(c_x,c_y)$ be the inward corner on a boundary 
$\cM_i$ whose $x$-coordinate $c_x$ precedes $q_x$, $c_i=\pred_x(q_x,\cM_i)$.
For any other inward corner $c'_i=(c'_x,c'_y)$ on $\cM_i$,
$c'_y>c_y$ if and only if $c'_x<c_x$ because the $y$-coordinates 
of inward corners decrease monotonously as their $x$-coordinates increase.   
Hence, $q$ dominates a point on $\cM_i$ if and only if 
$q_y \geq c_y$. Thus given a query point $q$, it suffices to  identify 
the minimal index $j$, such that the $y$-coordinate of the inward 
corner $c_j\in \cM_j$ that precedes $q_x$ is smaller than or equal to $q_y$.
The $x$-axis is subdivided into intervals of size $\log n$. 
For each interval $I_s$ the list $L_s$ contains indexes of 
boundaries $\cM_i$ such that the $x$-coordinate of at least one inward 
corner of $\cM_i$ belongs to $I_s$. For a query point $q$ with $q_x\in I_s$
and for every $j\in L_s$, we can find the 
inward corner preceding $q_x$ with respect to its $x$-coordinate, $\pred_x(q_x,\cM_j)$, 
in $O(1)$ time because $x$-coordinates of all relevant inward corners belong to an 
interval of size $\log n$.   Hence, we can find the 
minimal index $j_s\in L_s$, such that $q$ dominates a point on $\cM_{j_s}$ 
in $O(\log \log n)$ time by binary search among indexes in $L_s$. 
For the left bound $a_s$ of an interval $I_s=[a_s,b_s]$ 
and for all indexes 
$j=1,\ldots, \log_{\alpha} n$, the list $A_s$ contains 
the inward corner $c_j$, such that $c_j=\pred_x(a_s,\cM_j)$.
By binary search in $A_s$ we can find 
the minimal $j_a$ such that $q$ dominates the inward 
corner $c_{j_a}\in A_s$. Clearly $j=\min(j_a,j_s)$ is the minimal 
index of a boundary dominated by $q$. 


\begin{theorem}\label{theor:2dgener}
There exists a $O(n\log^2 n)$ space data structure that supports 
two-dimensional approximate range counting queries on $n\times n$ grid 
in $O(\log \log n)$ 
time. 
\end{theorem}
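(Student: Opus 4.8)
The plan is to reduce a general (four-sided) orthogonal range counting query to a constant number of \emph{dominance} counting queries, each handled by the structure of Theorem~\ref{theor:domin2d1}. The point that makes this possible is that we only need an approximate answer: this lets us replace the usual $\Theta(\log n)$-piece canonical decomposition of a range tree by a decomposition into just four pieces, obtained by going only one ``split node'' deep in each of two coordinates.

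Concretely, I would build a two-level range tree~\cite{B80}. The primary tree is a balanced search tree on the $x$-coordinates, arranged so that the lowest common ancestor of any two leaves is computable in $O(1)$ time (standard word-RAM bit tricks) and augmented with a predecessor structure that maps a query value into rank space in $O(\log\log n)$ time. For a primary node $v$ holding the point set $P_v$, the secondary structure is the analogous balanced tree on the $y$-coordinates of $P_v$ with its own predecessor structure, and at every node $w$ of every secondary tree we store four instances of the linear-space dominance-counting structure of Theorem~\ref{theor:domin2d1} on $P_w$, one per orientation of the two coordinate axes. Since the primary tree stores $O(n\log n)$ point occurrences, a secondary tree on $m$ points uses $O(m\log m)$ space, and the dominance structures are linear, the total space is $\sum_v O(|P_v|\log |P_v|) = O(n\log^2 n)$.

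To answer $Q = [x_1,x_2]\times[y_1,y_2]$: use the primary predecessor structure to find the rank interval of $[x_1,x_2]$ (report $0$ if it is empty), let $z$ be the $O(1)$-time LCA of the two corresponding leaves, and let $z_L,z_R$ be its children. Every point of $P\cap Q$ lies in $P_z = P_{z_L}\cup P_{z_R}$; inside $P_{z_L}$ the constraint $x\le x_2$ is automatically satisfied, and symmetrically $x\ge x_1$ is automatic inside $P_{z_R}$. Descending one level into the secondary tree of $z_L$, find (again by a predecessor query plus an $O(1)$ LCA) the split node $w$ for $[y_1,y_2]$ and its children $w_L,w_R$; inside $P_{w_L}$ the constraint $y\le y_2$ is automatic and inside $P_{w_R}$ the constraint $y\ge y_1$ is automatic. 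One checks that $P_{w_L}\cap Q$ is exactly the set of points of $P_{w_L}$ dominating $(x_1,y_1)$, and $P_{w_R}\cap Q$ is exactly the set of points of $P_{w_R}$ with $x_p\ge x_1$ and $y_p\le y_2$ (a dominance query after reflecting the $y$-axis). Doing the same for $z_R$ gives two further axis-reflected dominance queries, and the four point sets $P_{w_L},P_{w_R},P_{w'_L},P_{w'_R}$ form a partition of $P\cap Q$. We answer all four dominance queries with the stored structures of Theorem~\ref{theor:domin2d1} and return the sum of the four estimates. The time is $O(\log\log n)$ for the predecessor queries, $O(1)$ for the two LCA computations, and $O(\log\log n)$ for the four dominance queries, hence $O(\log\log n)$ overall.

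For correctness, since the four sets partition $P\cap Q$, writing $k_i$ for the true count of the $i$-th set and $\hat k_i$ for the estimate of Theorem~\ref{theor:domin2d1}, the reported value $\sum_i \hat k_i$ lies between $\sum_i k_i = |P\cap Q|$ and a constant multiple of it; an empty query yields $0$. I expect the crux to be precisely this error-accumulation point: an ordinary range-tree decomposition of a four-sided query yields $\Theta(\log n)$ canonical pieces, and even a $(1+\eps)$-approximation of each would compound to a $\Theta(\log n)$-factor relative error, while an inclusion–exclusion reduction to dominance fails because the error of each dominance count can dwarf $|P\cap Q|$. The two-level split-node decomposition keeps the number of pieces constant, and the disjointness of the four pieces keeps the error multiplicatively bounded. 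The remaining work is routine: arranging both tree levels for $O(1)$-time LCA and $O(\log\log n)$-time rank lookups, the latter being needed because the secondary trees must be built over rank-reduced coordinate sets to keep the space at $O(n\log^2 n)$.
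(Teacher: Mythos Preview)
Your proposal is correct and is essentially the same approach the paper takes: the paper invokes the standard reduction of~\cite{CG86b,SR95} that decomposes a four-sided query into at most four dominance queries on precomputed subsets whose total size is $O(n\log^2 n)$, and you have spelled out exactly that two-level split-node construction in detail. The space and time analyses match, and your observation that the constant number of disjoint pieces keeps the multiplicative error bounded is the right correctness argument.
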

The next Lemma will enable us to obtain a better estimation of the number 
of points. 
\begin{lemma}\label{lemma:adn}
There exists a $O(n\log n)$ space data structure that supports 
two-dimensional approximate range counting queries on $n\times n$ grid 
with an additive error $n^{\rho}$ in $O((1/\rho)\log\log n)$ 
time for any $\rho$, $0<\rho <1$. 
\end{lemma}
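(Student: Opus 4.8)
The plan is to build a hierarchical decomposition of $P$ and answer a query by touching only $O(1/\rho)$ nodes of it, each in $O(\log\log n)$ time. As usual it suffices to handle \emph{dominance} (quadrant) counting: a rectangle query equals a signed sum of four dominance queries, so an additive error $n^{\rho}/4$ for each of these yields additive error $n^{\rho}$ for the rectangle, and I will describe the structure for dominance counting. First I would reduce to a rank‑space "prefix'' problem. Sort $P$ by $x$‑coordinate and build a tree $T$ over the sorted sequence with fan‑out $B=\lceil n^{\rho}\rceil$, so that $T$ has depth $O(1/\rho)$; each node $v$ at depth $\ell$ corresponds to a contiguous block $P_v$ of $x$‑consecutive points with $|P_v|=\Theta(n/B^{\ell})$.

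For a query $q=(q_x,q_y)$, one static predecessor query (time $O(\log\log n)$ on the $n\times n$ grid) turns the condition $x\ge q_x$ into "$x$‑rank $\ge r$''. I would then walk the root‑to‑leaf path of rank $r$ in $T$: at each visited node $v$ I add the number of points of $P_v$ that lie in the children of $v$ lying entirely to the right of rank $r$ and have $y$‑coordinate $\ge q_y$, and descend into the child containing rank $r$. When the path reaches a leaf, the remaining block has only $O(n^{\rho})$ points, which I simply discard. Each internal node $v$ stores a secondary structure answering: given a threshold on $y$ and an index $c$, (approximately) count the points of $P_v$ lying in children $c,\dots,B$ of $v$ with $y$‑coordinate at least the threshold. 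I would build this from the approximate‑boundary machinery of this section applied to $P_v$ in rank space — essentially a miniature copy of the structure of Theorem~\ref{theor:domin2d1} — together with the "$\log n$‑size interval'' searching trick from the proof of Theorem~\ref{theor:domin2d1}, so that the sub‑query costs $O(\log\log n)$; the boundaries stored at $v$ are taken with slack roughly $|P_v|^{\rho}$, so the sub‑query answer has additive error $O(|P_v|^{\rho})$.

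For the error bound, the quantity extracted at a node $v$ of depth $\ell$ is at most $|P_v|=O(n/B^{\ell})$ and is returned with additive error $O((n/B^{\ell})^{\rho})$; summing over the $O(1/\rho)$ visited nodes gives total error $O\!\left(\sum_{\ell\ge 0}(n/B^{\ell})^{\rho}\right)=O(n^{\rho})$ because the summands decrease geometrically, and the discarded leaf adds a further $O(n^{\rho})$. The query time is $O(\log\log n)$ for the predecessor query plus $O(\log\log n)$ per visited node, i.e. $O((1/\rho)\log\log n)$. The space is $O(n\log n)$: at each of the $O(1/\rho)\le O(\log n)$ depths the secondary structures are sparse — $O(|P_v|^{1-\rho})$ up to lower‑order factors per node by the approximate‑boundary bound — and summing over nodes at one depth, and then over depths, a single depth dominates. (As a final remark, the fan‑out need not really be tied to $\rho$: a tree of fixed fan‑out can serve all sufficiently large $\rho$ by stopping the descent earlier, which is how $\rho$ can be left unspecified when this lemma is invoked later.)

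The step I expect to be the main obstacle is the secondary structure at the internal nodes. It must deliver a good \emph{additive} approximation of a two‑dimensional (children‑suffix $\times$ $y$‑threshold) count in $O(\log\log n)$ time — not $O(\log n)$ and not $O((1/\rho)\log\log n)$ — while its total size, summed over all of $T$, stays $O(n\log n)$. Reconciling these is exactly where the $\log n$‑size‑interval trick of Theorem~\ref{theor:domin2d1} and a careful choice of the approximate‑boundary slacks (so the per‑node errors telescope into $O(n^{\rho})$) are needed, and it is the part requiring the most care to make fully rigorous.
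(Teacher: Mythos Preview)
Your approach is genuinely different from the paper's, and the gap you yourself flag in the last paragraph is real and, as written, not closed. The secondary structure at an internal node $v$ must return a two-dimensional dominance count on $P_v$ with \emph{additive} error $O(|P_v|^{\rho})$ in $O(\log\log n)$ time. The machinery of Theorem~\ref{theor:domin2d1} does not do this: a family of $t_{\alpha}$-boundaries at geometrically spaced thresholds yields only a constant-\emph{factor} approximation, which on a count as large as $|P_v|$ means additive error $\Theta(|P_v|)$, not $|P_v|^{\rho}$. To force additive error $|P_v|^{\rho}$ via boundaries you would need thresholds spaced \emph{arithmetically} at multiples of $|P_v|^{\rho}$, i.e.\ $\Theta(|P_v|^{1-\rho})$ of them; but then the list $A_s$ in the ``$\log n$-size interval'' trick has $\Theta(|P_v|^{1-\rho})$ entries, and the binary search over it costs $\Theta(\log n)$ rather than $O(\log\log n)$. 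Your per-node query time blows up and the overall bound becomes $O((1/\rho)\log n)$. In short, the very sub-problem you hand to each node of your $B$-ary tree is essentially Lemma~\ref{lemma:adn} again, and Theorem~\ref{theor:domin2d1} is the wrong tool for it.

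The paper's proof sidesteps this entirely; it does not use approximate boundaries for this lemma at all. It employs a much simpler recursive $\sqrt{n}$-slab decomposition in \emph{both} coordinates: split into $\sqrt{n}$ $x$-slabs and $\sqrt{n}$ $y$-slabs, each containing $\sqrt{n}$ points, and store at every one of the $O(n)$ grid corners $(x_i,y_j)$ the \emph{exact} dominance count. For a query $q$, two predecessor queries locate the containing slabs $X_i,Y_j$; the stored value $c(x_i,y_j)$ already approximates $c(q_x,q_y)$ with additive error $2\sqrt{n}$, and the two residuals $c(x_i,q_y,Y_j)$ and $c(q_x,q_y,X_i)$ are handled by recursing into those two slabs. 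After recursion depth $v=\lceil\log(1/\rho)\rceil+2$ the additive error is $2^{v}\cdot n^{1/2^{v}}\le n^{\rho}$; there are $2^{v}=O(1/\rho)$ recursive calls, each costing $O(\log\log n)$, and the space recurrence $s(n)=O(n)+2\sqrt{n}\,s(\sqrt{n})$ gives $O(n\log n)$. The per-node work is just a table lookup plus two predecessor queries---no secondary counting structure is needed.
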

\begin{proof}
We divide the grid into $x$-slabs $X_i=[x_{i-1},x_i]\times [1,n]$ and 
$y$-slabs $Y_j=[1,n]\times [y_{j-1},y_j]$, so that each slab
contains $n^{1/2}$ points. 
For every point $(x_i,y_j)$, $0\leq i,j,\leq n^{1/2}$ we store the 
number of points in  $P$ that dominate it.  
There is also a recursively defined data structure for each slab. 
The total space usage is $s(n)=O(n)+2n^{1/2}s(n^{1/2})$ and 
$s(n)=O(n\log n)$. 

We can easily obtain an approximation with additive error $2n^{1/2}$ 
using the first level data structure: for a query $q=(q_x,q_y)$ we identify
the indexes $i$ and $j$, such that $x_{i-1} \leq q_x \leq x_i$ and 
$y_{j-1}\leq q_y \leq y_j$, i.e. we identify the $x$-slab $X_i$ and
 the $y$-slab $Y_j$ that contain $q$.  Indexes $i$ and $j$ can be found 
in $O(\log \log n)$ time.
Let $c(x,y)$ be the number of points that dominate a point $p=(x,y)$;
let $c(x,y,X_i)$ ($c(x,y,Y_j)$) be the number of points in 
the slab $X_i$ ($Y_j$) that dominate $p=(x,y)$. 
Then $c(q_x,q_y)=c(x_i,y_j) + c(x_i,q_y,Y_j) + c(q_x,q_y,X_i)$. 
Since  $c(x_i,q_y,Y_j)\leq n^{1/2}$ and $c(q_x,q_y,X_i)\leq n^{1/2}$, 
the value of $c(x_i,y_j)$ is an approximation of 
$c(q_x,q_y)$ with an additive error $2n^{1/2}$.  
Using recursive data structures for slabs $X_i$ and $Y_j$ we can estimate 
$c(q_x,q_y,X_i)$ and $ c(x_i,q_y, Y_j)$ with an additive error $2n^{1/4}$
and estimate $c(q_x,q_y)$ with an additive error $4n^{1/4}$. 
If the recursion depth is $v$ (i.e. if we apply recursion $v$ times),
then the total number of recursive calls is $O(2^v)$ and 
 we  obtain in $O((2^v)\log\log n)$ time  an approximation 
with additive error  $2^v\cdot n^{1/2^v}$ for any positive integer $v$.

We set recursion depth $v=\lceil \log (1/ \rho) \rceil +2$.
Then, $v  + (1/2^v) \log n \leq (\rho/4)\log n + \log (1/\rho)=
(\rho/4 +\frac{\log (1/\rho)}{\log n})\log n < \rho\log n$. 
Hence, $n^{\rho} > 2^v n^{1/2^v}$. 
Therefore, if recursion depth is set to $v$, 
then our data structure provides an answer with additive error $n^{\rho}$.
\end{proof}

\begin{theorem}\label{theor:domin2d2}
There exists a $O(n\log^2 n)$ space data structure that supports 
two-dimensional dominance  counting queries on $n\times n$ grid 
with an additive error 
$k^{\rho}$ for an arbitrary parameter $\rho$, $0<\rho <1$, in 
$O( (1/\rho)\log \log n)$ time.\\
There exists a $O(n\log^4 n)$ space data structure that supports 
two-dimensional  range counting queries  on $n\times n$ grid with an additive
 error 
$k^{\rho}$ for an arbitrary parameter $\rho$, $0<\rho <1$,
 in $O( (1/\rho)\log \log n)$ time. 
\end{theorem}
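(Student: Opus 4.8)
The plan is to use the \emph{level} of a query as the bridge between the constant-factor dominance structure of Theorem~\ref{theor:domin2d1} and the additive-$n^{\rho}$ structure of Lemma~\ref{lemma:adn}, applying the latter only to point sets whose size is proportional to the answer.

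For the dominance statement, keep the boundaries $\cM_1,\ldots,\cM_s$ and the $O(\log\log n)$-time procedure from the proof of Theorem~\ref{theor:domin2d1} that returns, for a query point $q$, the minimal index $j$ with $q$ dominating an inward corner of $\cM_j$; recall this guarantees $\alpha^{j-1}\le k\le\alpha^{j+1}$. In addition, for every index $i$ and every inward corner $c$ of $\cM_i$ let $S_c=\{p\in P: p \text{ dominates } c\}$; the defining property of $\cM_i$ gives $|S_c|\le\alpha^{i+1}$, and on each $S_c$ I would store, after reduction to rank space, a copy of the Lemma~\ref{lemma:adn} structure together with predecessor structures on its two coordinate sets. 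To answer a query $q$, compute $j$ and then $c_j=\pred_x(q_x,\cM_j)$; since $q$ landed on level $j$ it dominates $c_j$, hence every point of $P$ dominating $q$ lies in $S_{c_j}$, and querying the Lemma~\ref{lemma:adn} structure of $S_{c_j}$ with $q$ returns an estimate of $k$ with additive error at most $|S_{c_j}|^{\rho}\le(\alpha^{j+1})^{\rho}\le\alpha^{2}(\alpha^{j-1})^{\rho}\le\alpha^{2}k^{\rho}$, in time $O((1/\rho)\log\log n)$; computing $j$ and $c_j$ and the rank-space lookups cost only $O(\log\log n)$, and the constant $\alpha^{2}$ is absorbed by a routine rescaling of $\rho$. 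For the space bound, $\cM_i$ has $O(n/\alpha^{i})$ inward corners and each associated structure occupies $O(\alpha^{i+1}\log n)$ space, so level $i$ costs $O(n\log n)$ and the $O(\log n)$ levels cost $O(n\log^{2}n)$ altogether.

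For the general range-counting statement, I would plug this dominance structure into the standard range-tree reduction behind Theorem~\ref{theor:2dgener}: using complete binary range trees on the rank-reduced $x$- and then $y$-coordinates, an arbitrary rectangle $[a,b]\times[c,d]$ is cut, via the $x$-LCA, into two $3$-sided queries over the point sets of the two child subtrees, and each $3$-sided query is cut, via the $y$-LCA, into two dominance queries (up to reflections of the axes, which cost only constant factors); a query thus reduces to $O(1)$ dominance queries, each served by the appropriate secondary dominance-$k^{\rho}$ structure. The $O(1)$ sub-rectangles partition the query rectangle, so their counts sum to $k$ and the total additive error is $\sum_i k_i^{\rho}=O(k^{\rho})$. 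Locating the endpoints in rank space costs $O(\log\log n)$, LCAs in complete binary trees take $O(1)$, and the $O(1)$ dominance queries take $O((1/\rho)\log\log n)$, so the query time is unchanged; the $y$-tree level raises the per-point space from $\log^{2}$ to $\log^{3}$ and the $x$-tree level to $\log^{4}$, giving $O(n\log^{4}n)$ in total.

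I expect the one genuinely load-bearing step to be the size estimate $|S_c|=O(\alpha^{i+1})$ in the dominance part: it is simultaneously what makes the Lemma~\ref{lemma:adn} error scale with $k$ instead of $n$ and what, after summing over levels, keeps the space at $O(n\log^{2}n)$. Everything after that --- the two range-tree levels and the bookkeeping of constants --- should be routine.
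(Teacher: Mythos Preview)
Your proposal is correct and follows essentially the same approach as the paper: attach a Lemma~\ref{lemma:adn} structure to the set of points dominating each inward corner of each $\cM_i$, use the Theorem~\ref{theor:domin2d1} machinery to locate the right corner $c_j$ whose associated set has size $O(k)$, and then query that structure; the extension to general rectangles via the standard $O(\log^2 n)$-blowup reduction to $O(1)$ dominance queries is likewise what the paper does. Your write-up is in fact a bit more explicit than the paper's on the error bookkeeping (the $\alpha^2$ constant and its absorption into $\rho$) and on the range-tree reduction, but the ideas are identical.
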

\begin{proof}
As in Theorem~\ref{theor:domin2d1} we construct $t$-boundaries 
$\cM_1, \ldots, \cM_{\log n}$, such that $M_i$ is a $2^i$-approximate 
boundary, i.e. each point on $\cM_i$ is dominated by  at 
least $2^i$ and at most $2^{2i}$ points of $P$. For each 
inward corner $c_{i,j}$ of every $M_j$, we store a data structure 
$D_{i,j}$ that contains all points that dominate $c_{i,j}$ and 
supports approximate counting queries as described in 
Lemma~\ref{lemma:adn}. For a fixed $j$, there are $O(\frac{n}{2^j})$ 
 data structures $D_{i,j}$, and each $D_{i,j}$ contains $O(2^j)$ points.
Hence, all data structures $D_{i,j}$ use $O(n\log^2 n)$ space.

As described in Theorem~\ref{theor:2dgener}, we can find in 
$O(\log \log n)$ time the minimal index $j$, such that 
$\cM_j$ is dominated by the query point $q$ and an inward corner 
$c_{i,j}\in \cM_j$ dominated by $q$. 
Then, we use the data structure $D_{i,j}$ to obtain a better approximation. 
Since $D_{i,j}$ contains $O(k)$ points, by Lemma~\ref{lemma:adn} 
$D_{i,j}$  estimates the number of points that dominate $q$ with an 
additive error $k^{\rho}$ in $O((1/\rho)\log\log n)$ time.  
We can extend the result for dominance counting to the general 
three-dimensional counting using the standard technique from range 
reporting~\cite{CG86b,SR95}; see also the proof of Theorem~\ref{theor:2dgener}.
\end{proof}

\begin{lemma}\label{lemma:adn2}
There exists a $O(n\log^3 n)$ space data structure that supports 
three-dimensional approximate range counting queries on $n\times n\times n$ grid 
with an additive error $n^{\rho}$ in $O(3^{v} \log \log n)$ time for any
 $\rho$,
$0<\rho <1$, and 
for $v=\log \frac{1}{\rho}/\log \frac{3}{2}+2$. 
\end{lemma}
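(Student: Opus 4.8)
\begin{proofsk}
The plan is to imitate the recursive construction of Lemma~\ref{lemma:adn}, but with branching factor three and a reduction from $m$ points to $m^{2/3}$ points at each level. Partition the grid into $x$-slabs $X_i=[x_{i-1},x_i]\times[1,n]\times[1,n]$, $y$-slabs $Y_j=[1,n]\times[y_{j-1},y_j]\times[1,n]$ and $z$-slabs $Z_l=[1,n]\times[1,n]\times[z_{l-1},z_l]$, where the cut coordinates are placed so that each slab contains $\Theta(n^{2/3})$ points of $P$; there are then $\Theta(n^{1/3})$ slabs of each kind and $\Theta(n)$ grid points of the form $(x_i,y_j,z_l)$. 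For each such grid point we store the exact dominance count $c(x_i,y_j,z_l)$, and inside each slab we recursively build the same structure on its $\Theta(n^{2/3})$ points. The space obeys $s(n)=O(n)+3\,n^{1/3}\,s(n^{2/3})$, which solves to $O(n\log^3 n)$ precisely because $3\cdot(2/3)^3=8/9<1$.

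To answer a dominance query $q=(q_x,q_y,q_z)$ I would first find, by a predecessor search on the $[1,n]$ universe (hence in $O(\log\log n)$ time), the slabs $X_i$, $Y_j$, $Z_l$ containing $q_x$, $q_y$, $q_z$, so that $q_x\le x_i$, $q_y\le y_j$, $q_z\le z_l$. Then split the set of points dominating $q$ into four disjoint classes by a fixed priority on the coordinates: $S_1$, the points of $X_i$ that dominate $q$; $S_2$, the points of $Y_j$ that dominate $(x_i,q_y,q_z)$; $S_3$, the points of $Z_l$ that dominate $(x_i,y_j,q_z)$; and $S_4$, the points that dominate $(x_i,y_j,z_l)$. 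A short case analysis on whether $p_x<x_i$, then on whether $p_y<y_j$, then on whether $p_z<z_l$, shows that these classes partition the answer, that $|S_4|$ equals the stored value $c(x_i,y_j,z_l)$, and that $|S_1|,|S_2|,|S_3|$ are again three-dimensional dominance counts on slabs of $\Theta(n^{2/3})$ points, which I would estimate recursively with query points $q$, $(x_i,q_y,q_z)$ and $(x_i,y_j,q_z)$. Because $S_4$ is exact, the query error is just the sum of the three recursive errors; errors only add and are never multiplied.

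I would stop the recursion after $v$ levels. A slab at that depth has $O(n^{(2/3)^v})$ points, so returning $0$ there is correct up to an additive error bounded by the slab size. Summing over the $3^v$ leaves, and over the $O(3^v)$ internal nodes each of which costs $O(\log\log n)$ for its three predecessor searches, gives query time $O(3^v\log\log n)$ and total additive error $O(3^v\cdot n^{(2/3)^v})$. For $v=\log\frac1\rho/\log\frac32+2$ one computes $(2/3)^v=\frac49\rho$, so the error is $O(3^v\,n^{4\rho/9})\le n^{\rho}$ as soon as $\log n$ is large enough relative to $1/\rho$, exactly as in the proof of Lemma~\ref{lemma:adn}.

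I expect the crux to be the choice of the partition $S_1,\dots,S_4$: it has to turn the three corrections into genuine three-dimensional dominance queries on the $n^{2/3}$-point slabs and to keep their errors additive. The priority ordering above does exactly this and is the spatial analogue of the identity $c(q_x,q_y)=c(x_i,y_j)+c(x_i,q_y,Y_j)+c(q_x,q_y,X_i)$ used in Lemma~\ref{lemma:adn}. Everything else---bounding the slab-location cost at every recursion node by $O(\log\log n)$, the off-by-one conventions at slab boundaries, and the treatment of very small $n$---is routine, and the only place where the exponents are forced is the inequality $3\cdot(2/3)^3<1$ that makes the space bound $O(n\log^3 n)$.
\end{proofsk}
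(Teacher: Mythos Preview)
Your proof is correct and follows essentially the same approach as the paper's: the same slab partition with $n^{2/3}$ points per slab, the same space recurrence $s(n)=O(n)+3n^{1/3}s(n^{2/3})$, the same recursive refinement yielding additive error $3^v\cdot n^{(2/3)^v}$, and the same choice of $v$. You simply spell out in more detail the three-way decomposition $S_1,S_2,S_3,S_4$ and the verification $3\cdot(2/3)^3<1$ for the space bound, where the paper just says ``by the same argument as in Lemma~\ref{lemma:adn}.''
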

\begin{proof}
We divide the grid into $x$-, $y$-, and $z$-slabs, $X_i=[x_{i-1},x_i]\times [1,n]\times [1,n]$, $Y_j=[1,n]\times [y_{j-1},y_j]\times [1,n]$, $Z_d=[1,n]\times [1,n]\times [z_{d-1},z_d]$, so that each slab
contains $n^{2/3}$ points. For each point $(x_i,y_j,z_d)$ we store the number of points in $P$ that dominate it. There is also a recursively defined data structure
 for each slab. 
The total space usage is $s(n)=O(n)+3n^{1/3}s(n^{2/3})$ and 
$s(n)=O(n\log^3 n)$.

For a query $q=(q_x,q_y,q_z)$ we identify the $x$-, $y$-, and $z$-slabs $X_i$, $Y_j$, and $Z_d$ that contain $q$. By the same argument as in Lemma~\ref{lemma:adn}, 
the number of points 
that dominate $(x_i,y_j,z_d)$ differs from the number of points that dominate 
$q$ by at most $3n^{2/3}$. 
We can estimate the number of points that dominate $q$ and belong to one 
of the slabs $X_i$, $Y_j$, and $Z_d$ using recursively defined data structures. 
If the recursion depth is $v$,
then we  obtain in $O(3^v\log \log n)$ time  an approximation 
with additive error  $3^v\cdot n^{(2/3)^v}$ for any positive integer $v$.
The result of the Lemma follows if we set  
$v=\log \frac{1}{\rho}/\log \frac{3}{2}+2$.
\end{proof}

\begin{theorem}\label{theor:approx3d}
There exists a $O(n\log^4 n)$ space data structure that supports 
 approximate dominance range counting queries on $n\times n\times n$ grid with an
 additive
 error $k^{\rho}$in $O((\log \log n)^3+ 3^v \log \log n)$ time for  any
 $\rho$, $0<\rho <1$,  and 
for $v=\log \frac{1}{\rho}/\log \frac{3}{2}+2$.\\
There exists a  $O(n\log^7 n)$ space data structure that supports 
 approximate range counting queries on $n\times n\times n$ grid  with
 an additive error $k^{\rho}$ in 
$O((\log \log n)^3+ 3^v \log \log n)$ time for  any $\rho$, $0<\rho <1$, and 
for $v=\log \frac{1}{\rho}/\log \frac{3}{2}+2$.
\end{theorem}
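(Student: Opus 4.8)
The plan is to lift the two-dimensional construction of Theorem~\ref{theor:domin2d2} into three dimensions, replacing the planar $t_{\alpha}$-boundaries by their three-dimensional staircase analogues and the planar point-location by a three-dimensional dominance search. For the dominance version I would build a family of three-dimensional $t$-approximate boundaries $\cM_1,\ldots,\cM_{\log n}$, where $\cM_j$ is a staircase surface separating the (downward-closed) region of points dominated by at least $2^j$ points of $P$ from the rest, constructed by the sweeping procedure of Vengroff--Vitter~\cite{VV96} generalized to $d=3$. As in the planar case, the inward corners of $\cM_j$ form an antichain, there are $O(n/2^j)$ of them, and $q$ dominates a point of $\cM_j$ if and only if $q$ dominates one of these inward corners. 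For every inward corner $c$ of every $\cM_j$ I would attach a copy $D_c$ of the data structure of Lemma~\ref{lemma:adn2}, built on the $\Theta(2^j)$ points of $P$ that dominate $c$. For a fixed $j$ this costs $O(n/2^j)$ structures of $O(2^j)$ points each, i.e.\ $O(n\cdot j^3)=O(n\log^3 n)$ space, and summing over the $\log n$ levels gives the claimed $O(n\log^4 n)$.

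To answer a query $q$ I would first find the minimal index $j^\ast$ such that $q$ dominates an inward corner of $\cM_{j^\ast}$, together with one witnessing corner $c^\ast$. Since the regions bounded by the $\cM_j$ are nested, this can be done by binary search over $j\in\{1,\ldots,\log n\}$, each step being a three-dimensional dominance emptiness query against the inward corners of $\cM_j$ (followed by a single reporting query at the end to obtain $c^\ast$); using the known $O((\log\log n)^2)$-time data structure for three-dimensional dominance reporting on a grid, the whole search costs $O((\log\log n)^2\cdot\log\log n)=O((\log\log n)^3)$ time. As in Theorem~\ref{theor:domin2d2}, minimality of $j^\ast$ forces the true count $k=|\{p\in P: p \text{ dominates } q\}|$ to be polynomially related to $2^{j^\ast}$, and since every point dominating $q$ also dominates $c^\ast$, the structure $D_{c^\ast}$ contains $k\le |D_{c^\ast}| = O(\mathrm{poly}(k))$ points; invoking Lemma~\ref{lemma:adn2} on $D_{c^\ast}$ with the precision parameter rescaled by a constant factor (to absorb that polynomial) yields an estimate of $k$ with additive error $k^\rho$ in $O(3^{v}\log\log n)$ time for $v=\log\frac1\rho/\log\frac32+2$. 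Adding the two phases gives the stated $O((\log\log n)^3+3^v\log\log n)$ query time; for an arbitrary integer universe one extra additive $O(\min(\log\log U,\sqrt{\log n/\log\log n}))$ term appears through the rank-space reduction~\cite{GBT84}.

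The general (non-dominance) three-dimensional case then follows from the dominance case by the standard reduction that expresses an orthogonal box query as a constant number of dominance queries over a three-level range tree, exactly as in Theorem~\ref{theor:2dgener} and~\cite{CG86b,SR95}: the query time is unchanged and the space grows by the $O(\log^3 n)$ factor of the range tree, giving $O(n\log^7 n)$.

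The step I expect to be the main obstacle is the first one: establishing that a three-dimensional $t$-approximate boundary can be built with only $O(n/t)$ inward corners and, more importantly, that these corner sets admit a dominance emptiness/reporting structure that is simultaneously fast enough to keep the navigation phase within $O((\log\log n)^3)$ time and cheap enough in space to keep the whole structure within $O(n\log^4 n)$ (resp.\ $O(n\log^7 n)$). The planar observation that the $y$-monotonicity of inward corners reduces location to a one-dimensional predecessor search has no direct analogue in 3D, so this is where the real work lies. A secondary technical point is pinning down the precise polynomial relating $|D_{c^\ast}|$ to $k$ so that the constant-factor rescaling of $\rho$ is valid and does not push $v$ beyond $\log\frac1\rho/\log\frac32+2$ up to the constant hidden in $3^v$.
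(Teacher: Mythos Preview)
Your high-level plan coincides with the paper's: build a geometric family of $2^j$-approximate ``boundaries'' $\cM_1,\dots,\cM_{\log n}$, attach a copy of the Lemma~\ref{lemma:adn2} structure to each corner, locate $q$ among the levels by a binary search costing $O((\log\log n)^3)$, and then refine inside the local structure; the space and time bookkeeping you give is exactly the paper's.

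The obstacle you single out is real, and the paper does \emph{not} resolve it by generalizing the Vengroff--Vitter sweep to three dimensions. It does two things instead. First, it flips the problem: rather than counting points that dominate $q$, it counts points \emph{dominated} by $q$ (equivalent by reflection). Second, for the flipped problem it invokes Afshani's approximate $t$-level~\cite{A08}: a set $\cL$ of $O(n/t)$ \emph{downward corners} such that every $q$ dominating at most $t$ points lies inside some $r\in\cL$, and every $r\in\cL$ dominates at most $2t$ points. These downward corners play exactly the role of your hoped-for 3D inward corners, and Afshani's theorem supplies the $O(n/t)$ size bound you were missing. With this in hand, the navigation step---deciding whether some $r\in\cM_j$ dominates $q$---is not a 3D dominance query but a point-location query in a rectangular planar subdivision (since the $r$'s form an antichain), answered in $O((\log\log n)^2)$ time and linear space~\cite{VV96,N07}; so your space worry for the navigation structure also disappears.

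Your secondary concern is harmless: minimality of $j^\ast$ gives $k>2^{\,j^\ast-1}$ while the chosen corner dominates at most $2^{\,j^\ast+1}$ points, so $|D_{c^\ast}|\le 4k$; the additive error from Lemma~\ref{lemma:adn2} is then $(4k)^{\rho}=O(k^{\rho})$ with no need to perturb $v$. The extension to general boxes via~\cite{CG86b,SR95} is exactly as you say.
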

\begin{proof}
Instead of counting points that dominate $q$ we count points dominated 
by $q$. Both types of queries are equivalent. Hence, the data structure 
of Lemma~\ref{lemma:adn2} can be used to approximately count points dominated 
by $q$. 

A downward corner of a point $p$ consists of all points dominated by $p$.
We define an approximate $t$-level as a set of downward corners $\cL$, 
such that 
(1) any point $p$ that dominates at most $t$ points of $P$ is contained 
in some $r\in \cL$ (2) any downward corner $r\in \cL$ contains at most 
$\alpha\cdot t$ points of $P$.
Afshani~\cite{A08} showed that for an arbitrary constant $\alpha$ there 
exists an approximate $t$-level of size $O(\frac{n}{t})$. 
We can assume  that no $r\in \cL$ dominates $r'\in \cL$ in an 
approximate $t$-level $\cL$: 
if $r$ dominates $r'$, then the downward corner $r'$ can be removed from 
$\cL$. Identifying an inward corner $r\in \cL$ that dominates a query point 
$q$ (or answering that no $r\in \cL$ dominates $q$) is equivalent to 
answering a point location query in a rectangular planar 
subdivision~\cite{VV96,N07} and takes $O((\log\log n)^2)$ time. 
 
Our data structure consists of approximate levels 
$\cM_1,\cM_2,\ldots,\cM_{\log n}$, such that $\cM_i$ is a $2^i$-approximate 
level and the constant $\alpha$ is chosen to be $2$. For every 
downward corner $r_{i,j}\in \cM_j$, we store all points dominated by 
$r_{i,j}$ in a data structure $D_{i,j}$; $D_{i,j}$ contains $O(2^j)$ points 
and supports counting queries with additive error $O(2^{\rho j})$ by 
Lemma~\ref{lemma:adn2}. All data structures $D_{i,j}$ use $O(n\log^4 n)$ space.

We can find a minimal $j$, such that $\cM_j$ dominates 
$q$ in $O((\log \log n)^3)$ time by binary search.  Let $r_{i,j}$ be the 
downward corner that dominates 
$q$. We can use the data structure $D_{i,j}$ to estimate the number of 
points that are dominated by $q$ with an additive error  $k^{\rho}$; 
by Lemma~\ref{lemma:adn2} this takes $O(3^{v} \log\log n)$ time for 
 $v=\log \frac{1}{\rho}/\log \frac{3}{2}+2$. 

We can extend the result for dominance counting to the general 
three-dimensional counting using the standard technique~\cite{CG86b,SR95}; see
 also the proof of Theorem~\ref{theor:2dgener}. 
\end{proof}
\subsection{Space-Efficient Approximate Range Counting in 2-D and 3-D }
\label{sec:space}
If we are interested in counting with an 
additive error $k^{c}$ for some predefined constant $c>0$, then 
the space usage can be significantly reduced. The two-dimensional 
data structure uses $O(n\log^2 n)$ space ($O(n)$ space for dominance 
counting), and the three-dimensional data structure uses $O(n\log^3 n)$ 
space ($O(n)$ space for dominance counting). The main idea of our 
improvement is that 
in the construction of Lemma~\ref{lemma:adn} (resp.\ Lemma~\ref{lemma:adn2}) 
each 
slab contains $n^{1/2+\eps}$ points ($n^{2/3+\eps}$ points) for some $\eps>0$ 
 and there is a constant 
number of recursion levels. 
\begin{lemma}\label{lemma:adn3}
For any fixed constant $c<1$, there exists a $O(n^{1-\eps})$ 
space data structure that supports 
two-dimensional approximate range counting queries on $n\times n$ grid 
with an additive error $n^{c}$ in $O(\log\log n)$ 
time. 
\end{lemma}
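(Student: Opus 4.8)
The plan is to mimic the recursion of Lemma~\ref{lemma:adn}, but with a slab size larger than $n^{1/2}$ so that only a constant number of recursion levels suffices, and then to charge the auxiliary dominance-count tables to a sublinear space budget. Fix a small $\delta>0$ to be chosen in terms of $c$, and divide the grid into $x$-slabs and $y$-slabs each containing $n^{1/2+\delta}$ points; equivalently there are $\Theta(n^{1/2-\delta})$ slabs in each direction. For every grid point $(x_i,y_j)$ on the slab boundaries we store $c(x_i,y_j)$, the number of points of $P$ dominating it: this is a table of size $O(n^{1/2-\delta}\cdot n^{1/2-\delta})=O(n^{1-2\delta})$. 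Recurse inside each slab. The space recurrence becomes $s(n)=O(n^{1-2\delta})+2n^{1/2-\delta}s(n^{1/2+\delta})$, and since the recursion bottoms out after a constant number $v=O(1)$ of levels (when the subproblem size drops below a constant), one checks that $s(n)=O(n^{1-\eps})$ for a suitable $\eps>0$ depending on $\delta$; the dominant term is the top-level table.

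Next I would bound the additive error. Exactly as in Lemma~\ref{lemma:adn}, one level of the decomposition reduces the query to a boundary-grid lookup plus two in-slab subqueries, incurring error $2n^{1/2+\delta}$; after $v$ levels the error is $2^v\cdot n^{(1/2+\delta)^v}$. Since $(1/2+\delta)^v\to 0$ as $v\to\infty$ for $\delta<1/2$, we can pick $v$ a constant large enough that $(1/2+\delta)^v < c$, and then for $n$ large enough $2^v n^{(1/2+\delta)^v} \le n^{c}$, giving additive error $n^{c}$ as claimed. The query time is $O(2^v\log\log n)=O(\log\log n)$ since $v$ is constant and each boundary-grid predecessor lookup (to locate the slab containing $q_x$ resp.\ $q_y$) costs $O(\log\log n)$ on an $n\times n$ grid.

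The one point that needs slightly more care — and which I expect to be the main obstacle — is choosing $\delta$ and $v$ consistently so that both the space bound $O(n^{1-\eps})$ and the error bound $n^{c}$ hold simultaneously, and making sure the recursion is well-defined: a subproblem at depth $\ell$ lives on a grid of side $n^{(1/2+\delta)^\ell}$ but still contains up to that many points, so rank-space reduction is not needed internally and the boundary tables at each level shrink geometrically. Concretely, given $c$, first fix $v$ with $(1/2+\delta)^v<c$ for, say, $\delta=1/4$, which forces $v=O(\log(1/c))=O(1)$; then the top-level table size $O(n^{1-2\delta})=O(n^{1/2})$ already dominates the recurrence, so $\eps$ can be taken as any constant below $2\delta=1/2$ (shrunk slightly to absorb the lower-order recursive terms). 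The rest is the routine verification already carried out in Lemma~\ref{lemma:adn}.
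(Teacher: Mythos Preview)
Your approach is the same as the paper's: slabs of size $n^{1/2+\delta}$, a constant recursion depth $v$ chosen so that $(1/2+\delta)^v<c$, and sublinear boundary tables. The error and time analyses are fine.

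However, your space analysis contains a real mistake. You claim that the top-level table of size $O(n^{1-2\delta})$ dominates the recurrence and that $\eps$ can be taken as any constant below $2\delta$. In fact the opposite holds: at recursion depth $k$ there are $2^k n^{1-(1/2+\delta)^k}$ subproblems, each of size $n^{(1/2+\delta)^k}$, so the total table space at level $k$ is $2^k n^{\,1-2\delta(1/2+\delta)^k}$, and this exponent \emph{increases} with $k$ toward $1$. Hence the deepest level dominates, and the $\eps$ you can actually achieve is of order $2\delta(1/2+\delta)^v$, which shrinks as $c$ (and therefore $v$) is chosen more aggressively. The conclusion $s(n)=O(n^{1-\eps})$ for some fixed $\eps>0$ still holds, but the justification needs to be fixed. (Relatedly, your parenthetical ``when the subproblem size drops below a constant'' contradicts your later, correct, choice of $v$; after a constant number of levels the subproblem size $n^{(1/2+\delta)^v}$ is still a growing function of $n$, and the recursion is cut off by the depth bound, not by the subproblems becoming $O(1)$.)
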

 \begin{proof}
 We divide the grid into $x$-slabs $X_i=[x_{i-1},x_i]\times [1,n]$ and 
 $y$-slabs $Y_j=[1,n]\times [y_{j-1},y_j]$, so that each slab
 contains $n^{1/2+\eps}$ points. As in Lemma~\ref{lemma:adn}, we store 
 for each point $(x_i,y_j)$, $0\leq i,j,\leq n^{1/2-\eps}$, the number 
 of points in $P$ that dominate it. 
 Note that there are $O(n^{1-2\eps})$ points $(x_i,y_j)$ for 
 $0\leq i,j\leq n^{1/2-\eps}$ . 
 If an $x$-slab or a $y$-slab contains more than $n^{f}$ points for 
 a constant $f=c/4$, we store a recursively defined data
  structure for that slab. The number of recursion levels is 
 $g= \lceil \frac{\log (1/f)}{\log (2/(1+2\eps))}\rceil$. 
 Since each point is stored in one recursively defined data structure for 
 an $x$-slab and in one recursively defined data structure for a $y$-slab, 
 the total number of points in all recursively defined data  structures 
 increases by factor $2$ with each recursion level.  
 Thus the total space usage is $\sum_{k=1}^{g} 2^g\cdot O(n^{1-\eps})=O(n^{1-\eps})$.

 Given a query $q=(q_x,q_y)$, we identify the $x$-slab $X_i$ and
  the $y$-slab $Y_j$ that contain $q$. 
 Let $c(x,y)$ be the number of points that dominate a point $p=(x,y)$;
 let $c(x,y,X_i)$ ($c(x,y,Y_j)$) be the number of points in 
 the slab $X_i$ ($Y_j$) that dominate $p=(x,y)$.
 As in the proof of Lemma~\ref{lemma:adn},
  $c(q_x,q_y)=c(x_i,y_j) + c(x_i,q_y,Y_j) + c(q_x,q_y,X_i)$, where $X_i$ and 
 $Y_j$ are the $x$-slab and the $y$-slab that contain $q$.
 If slabs $X_i$ and $Y_j$, contain more than $n^f$ points, we estimate 
 $c(x_i,q_y,Y_j)$ and  $c(q_x,q_y,X_i)$  using data structures for slabs
 $Y_j$ and $X_i$. 
 Otherwise we use $c(x_i,y_j)$ as an estimation for $c(q_x,q_y)$. 
 By the same argument as in the proof of Lemma~\ref{lemma:adn}, we obtain 
 an approximation with additive error $2^g\cdot n^{f}$. Since $g < 2\log(1/f)$
 and $f = c/4$, $g+f\log n < 2\log(1/f) + (c/4) \log n < c \log n$. 
 Hence, $2^g\cdot n^{f} < n^c$ and we estimate the number of points in a 
 range with an additive error that is less than $n^c$.
 \end{proof}
Using Lemma~\ref{lemma:adn3}, we can prove the following Theorem.
\begin{theorem}\label{theor:2dlin}
For any fixed constant $c<1$, there exists a $O(n)$ space data structure that 
supports two-dimensional dominance  counting queries on $n\times n$ grid 
with an additive error 
$k^{c}$ in 
$O(\log \log n)$ time.\\
For any fixed constant $c<1$, there exists a $O(n\log^2 n)$ space data 
structure that supports two-dimensional  range counting queries  on 
$n\times n$ grid with an additive  error 
$k^{c}$ in $O(\log \log n)$ time. 
\end{theorem}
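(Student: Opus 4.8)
The plan is to re-run the proof of Theorem~\ref{theor:domin2d2} essentially unchanged, but to attach to each inward corner a copy of the \emph{space-efficient} structure of Lemma~\ref{lemma:adn3} instead of the structure of Lemma~\ref{lemma:adn}. The gain is entirely a matter of space accounting: Lemma~\ref{lemma:adn3} uses only $O(m^{1-\eps})$ words on $m$ points (versus $O(m\log m)$ for Lemma~\ref{lemma:adn}), and since the inward corners of the $j$-th boundary carry subproblems of size $\Theta(2^j)$ while there are only $O(n/2^j)$ of them, the per-level cost becomes $O(n\,2^{-\eps j})$, which sums over the $\Theta(\log n)$ boundary levels to $O(n)$ rather than to $O(n\log^2 n)$.

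Concretely, I would first build, exactly as in Theorems~\ref{theor:domin2d1} and~\ref{theor:domin2d2}, the $2^i$-approximate boundaries $\cM_1,\dots,\cM_{\log n}$ (together with a trivial bottom boundary used only to pin down the $O(1)$ smallest values of $k$), and the $O(n)$-space point-location structure of Theorem~\ref{theor:domin2d1}, which, given a query point $q$, reports in $O(\log\log n)$ time the minimal index $j$ with $q$ dominating an inward corner $c_{i,j}$ of $\cM_j$, together with that corner. For every inward corner $c_{i,j}$ I then attach the Lemma~\ref{lemma:adn3} structure $D_{i,j}$ built on the $O(2^j)$ points of $P$ dominating $c_{i,j}$, instantiated with a constant $c'$, $0<c'<c$, fixed below. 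The one thing to be careful about here is rank-space: I would \emph{not} rank-reduce each $D_{i,j}$ separately (which would by itself cost $\Theta(n\log n)$), but keep the coordinates in $[n]$, so that, exactly as in Lemma~\ref{lemma:adn3}, each $D_{i,j}$ uses $O((2^j)^{1-\eps})$ words and answers dominance-count queries in $O(\log\log n)$ time. Then the boundaries and the point-location structure use $O(n)$ words, and $\sum_{j\le\log n}O(n/2^j)\cdot O((2^j)^{1-\eps})=O(n)$ words cover all of the $D_{i,j}$; if the coordinates exceed $n$, a single global rank-space reduction restores this at the stated additive cost in query time.

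To answer a query I would locate $j$ and $c_{i,j}$ and query $D_{i,j}$ with $q$. Since $q$ dominates $c_{i,j}$, the points of $D_{i,j}$ dominating $q$ are exactly the points of $P$ dominating $q$, and (as in Theorem~\ref{theor:domin2d1}) their number $k$ satisfies $2^{j-1}\le k\le 2^{j+1}$, so by Lemma~\ref{lemma:adn3} we obtain in $O(\log\log n)$ time an estimate with additive error $O((2^{j+1})^{c'})=O(k^{c'})$; with $c'$ chosen strictly below $c$ this is at most $k^c$ for all $k$ above a constant $k_0=k_0(c,c')$. The remaining regimes are handled as usual: when $k\le k_0$ the relevant $D_{i,j}$ contains only $O(1)$ points and can be answered exactly (equivalently, keep a linear-space dominance-reporting structure that returns up to $k_0$ points and in particular separates $k=0$ from $k\ge 1$), and when no index $j$ is found one returns $n$. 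This proves the first statement. For general orthogonal range counting I would apply the same dominance-to-orthogonal reduction used in Theorems~\ref{theor:2dgener} and~\ref{theor:domin2d2} (the standard technique of~\cite{CG86b,SR95}), which multiplies the space by $O(\log^2 n)$ and leaves the $O(\log\log n)$ query time unchanged, giving the $O(n\log^2 n)$-space structure of the second statement.

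The only step that genuinely has to be right is the space telescoping, and it works precisely because Lemma~\ref{lemma:adn3} is \emph{sublinear} in the subproblem size: the geometric progression over the $\Theta(\log n)$ boundary levels then converges, whereas any polylogarithmic overhead in Lemma~\ref{lemma:adn3} --- or rank-reducing the per-corner structures individually --- would reintroduce a $\log^{\Theta(1)}n$ factor. Everything else (the $k^{c'}\to k^c$ rescaling, the tiny-$k$ and $k=\Theta(n)$ special cases, and the reduction to four-sided queries) is routine and is already used verbatim elsewhere in the paper.
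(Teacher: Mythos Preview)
Your proposal is correct and follows essentially the same route as the paper: replace the per-corner structures $D_{i,j}$ of Theorem~\ref{theor:domin2d2} by the sublinear-space structures of Lemma~\ref{lemma:adn3}, observe that $\sum_j O(n/2^j)\cdot O(2^{(1-\eps)j})=O(\sum_j n/2^{\eps j})=O(n)$, and then lift to general rectangles via the standard reduction of~\cite{CG86b,SR95}. Your extra care about not rank-reducing each $D_{i,j}$ separately and about the $c'<c$ slack for small~$k$ is sound and more explicit than the paper's own write-up, but the underlying argument is the same.
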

\begin{proof}
We construct a sequence of $t$-approximate boundaries $\cM_i$ in the same 
way as in Theorem~\ref{theor:domin2d2} and store all points that 
dominate an inward corner $c_{i,j}$ in data structure $D_{i,j}$. 
The only difference is that $D_{i,j}$ is implemented as described in Lemma~\ref{lemma:adn3}.  For a fixed $j$, there are $O(\frac{n}{2^j})$ 
 data structures $D_{i,j}$, and each $D_{i,j}$ needs $O(2^{(1-\eps)\cdot j})$ space.
Hence, all data structures $D_{i,j}$ use $O(\sum_j \frac{n}{2^{\eps\cdot j}})=O(n)$
space. 

Dominance queries are processed in exactly the same way as in Theorem~\ref{theor:domin2d2}. We can extend the result for dominance counting to the general 
two-dimensional counting using the standard technique from range 
reporting~\cite{CG86b,SR95}; see also the proof of Theorem~\ref{theor:2dgener}.
\end{proof}

\begin{lemma}\label{lemma:adn4}
For any fixed constant $c<1$, there exists a $O(n^{1-\eps})$ space data
structure that supports three-dimensional approximate range counting queries
on $n\times n \times n$ grid with an additive error $n^{c}$ in 
$O(\log\log n)$  time. 
\end{lemma}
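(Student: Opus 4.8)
The plan is to apply to the recursive three-dimensional slab construction of Lemma~\ref{lemma:adn2} the space-saving device of Lemma~\ref{lemma:adn3}: make every slab contain $n^{2/3+\eps}$ points rather than $n^{2/3}$, and halt the recursion once a slab holds at most $n^{f}$ points, for a small constant $f=f(c)$. Since $2/3+\eps<1$, iterating the map $N\mapsto N^{2/3+\eps}$ from $N=n$ reaches the threshold $n^{f}$ after only $g=\lceil \log(1/f)/\log(3/(2+3\eps))\rceil$ steps --- a constant --- and it is this bounded recursion depth that will keep the structure within $O(n^{1-\eps})$ space while still pushing the additive error below $n^{c}$.

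In detail, I would partition the grid into $x$-, $y$- and $z$-slabs $X_i,Y_j,Z_d$, each holding $n^{2/3+\eps}$ points, and store for each corner $(x_i,y_j,z_d)$, $0\le i,j,d\le n^{1/3-\eps}$, the number of points of $P$ dominating it; there are only $O(n^{1-3\eps})$ such corners. Every slab still containing more than $n^{f}$ points gets a recursively built copy of the same structure (the slab, after rank reduction, is again an axis-parallel box of $N$ points on an $N\times N\times N$ grid). As in the proof of Lemma~\ref{lemma:adn}, for a query $q=(q_x,q_y,q_z)$ with $q\in X_i\cap Y_j\cap Z_d$ the dominance count splits \emph{exactly} as
\[
c(q)=c(x_i,y_j,z_d)+c(q_x,q_y,q_z,X_i)+c(x_i,q_y,q_z,Y_j)+c(x_i,y_j,q_z,Z_d),
\]
where $c(\cdot,X_i)$ is a dominance count restricted to the points of slab $X_i$, and so on; every point dominating $q$ falls in exactly one of the four groups, classified by the first coordinate that drops below its slab boundary. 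The corner term is exact; each of the three slab-restricted terms is recursed on in the corresponding sub-structure, the recursion stopping at a slab of at most $n^{f}$ points, at which point the term is simply dropped (it is at most $n^{f}$). Locating the three slabs containing a query point costs $O(\log\log n)$ per recursion node (a predecessor search among the slab boundaries of that structure), and since a query induces a ternary recursion tree of depth $g=O(1)$, hence $O(1)$ nodes, the total query time is $O(\log\log n)$.

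For the space bound I would argue as in Lemma~\ref{lemma:adn3}. Each point of $P$ lies in one $x$-slab, one $y$-slab and one $z$-slab, so the points handled over all level-$k$ structures number $3^{k}n$; a structure on $N$ points stores $O(N^{1-3\eps})$ corner counts, so the corner-count space summed over level $k$ is $O(3^{k}n\cdot n^{-3\eps f})$, using that every recursed slab has at least $n^{f}$ points. Since $k\le g=O(1)$, the total over all levels is $O(n^{1-3\eps f})=O(n^{1-\eps})$. For the error, the corner terms are exact, and every leaf of the ternary recursion tree lies in a slab of at most $n^{f}$ points and so contributes additive error at most $n^{f}$; as the tree has at most $3^{g}$ leaves, the total additive error is $O(3^{g}n^{f})$. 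Choosing $f$ to be a small enough constant fraction of $c$ and using $g=O(\log(1/f))$ gives $g\log 3+f\log n<c\log n$ for all large $n$, hence $3^{g}n^{f}<n^{c}$; general (non-dominance) orthogonal range counting follows, with the same asymptotics, from the standard $2^{3}$-corner inclusion--exclusion, exactly as in Lemma~\ref{lemma:adn2}.

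The one point that needs care is the \emph{disjoint} four-way split of the dominance count shown above: the crude ``corner count $\pm\,3\times$ slab size'' estimate does not recurse correctly, because the slabs $X_i,Y_j,Z_d$ overlap, and only the disjoint split makes the per-level errors add up to the clean $O(3^{g}n^{f})$ bound. Once that identity is in hand, the remaining parameter balancing --- choosing $\eps$, $f$ and $g$ --- is identical to that of Lemmas~\ref{lemma:adn2} and~\ref{lemma:adn3}, and I anticipate no further obstacle.
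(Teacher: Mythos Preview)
Your proposal is correct and follows essentially the same route as the paper's proof sketch: slabs of $n^{2/3+\eps}$ points, stored corner counts, recursion cut off at the $n^{f}$ threshold (the paper takes $f=c/16$), with the space and error analysis deferred to the pattern of Lemma~\ref{lemma:adn3}. Your explicit disjoint four-term identity for the dominance count and the accompanying space computation simply spell out what the paper leaves to the phrase ``in the same way as in Lemma~\ref{lemma:adn3}''.
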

 \begin{proofsk}
 Like in Lemma~\ref{lemma:adn2}, we divide the grid into 
 $x$-, $y$-, and $z$-slabs, $X_i=[x_{i-1},x_i]\times [1,n]\times [1,n]$, $Y_j=[1,n]\times [y_{j-1},y_j]\times [1,n]$, $Z_d=[1,n]\times [1,n]\times [z_{d-1},z_d]$, but  each slab
 contains $n^{2/3+\eps}$ points. 
 For each point $(x_i,y_j,z_d)$ we store the number of points in $P$ that dominate it. If the number of points in a slab is greater than $n^f$ for $f=c/16$, 
 then   we  store a recursively defined data structure for each slab. 

 We can estimate the space usage and analyze the query algorithm in the same 
 way as in Lemma~\ref{lemma:adn3}.
 \end{proofsk}
\begin{theorem}\label{theor:approx3d2}
For any fixed constant $c<1$, there exists a $O(n)$ space data structure 
that supports approximate dominance range counting queries on 
$n\times n\times n$ grid with an additive error $k^{c}$ in $O((\log \log n)^3)$ 
time.\\
For any fixed constant $c<1$, there exists a  $O(n\log^4 n)$ space data 
structure that supports  approximate range counting queries on 
$n\times n\times n$ grid  with an additive error $k^{c}$ in 
$O((\log \log n)^3)$ time.
\end{theorem}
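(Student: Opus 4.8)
\begin{proofsk}
The plan is to follow the proof of Theorem~\ref{theor:approx3d} almost verbatim, changing only the data structure that is attached to the corners of the approximate levels: in place of the $O(n\log^3 n)$-space structure of Lemma~\ref{lemma:adn2} I use the sublinear-space structure of Lemma~\ref{lemma:adn4}, whose space saving comes precisely from having a constant number of recursion levels and slabs of size $n^{2/3+\eps}$.

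First consider the dominance case. As in Theorem~\ref{theor:approx3d}, I would switch to counting the points of $P$ dominated by the query point $q$ and build the sequence of approximate levels $\cM_1,\cM_2,\ldots,\cM_{\log n}$ with $\cM_i$ a $2^i$-approximate level and $\alpha=2$ (using~\cite{A08}), together with the rectangular-subdivision point-location structures of Theorem~\ref{theor:approx3d} that, for a query point $q$, locate the minimal index $j$ such that $q$ is contained in some downward corner $r_{i,j}\in\cM_j$. Each $\cM_j$ has $O(n/2^j)$ downward corners, so these point-location structures occupy $O(\sum_j n/2^j)=O(n)$ space, and the search for $j$ costs $O((\log\log n)^3)$ time, exactly as in Theorem~\ref{theor:approx3d}. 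For every downward corner $r_{i,j}\in\cM_j$ I store in a structure $D_{i,j}$ all points of $P$ dominated by $r_{i,j}$; as before $D_{i,j}$ contains $m_{i,j}=O(2^j)$ points, but now $D_{i,j}$ is the data structure of Lemma~\ref{lemma:adn4}, which occupies only $O(m_{i,j}^{1-\eps})$ space and answers an approximate counting query with additive error $O(m_{i,j}^{c'})$ in $O(\log\log n)$ time, where $c'$ is a constant with $0<c'<c$ fixed in advance (this is legitimate since $c$ is fixed).

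To answer a query $q$ I locate the minimal $j$ and a corner $r_{i,j}$ containing $q$, then query $D_{i,j}$. Minimality of $j$ forces $k=\Omega(2^j)$, where $k$ is the number of points dominated by $q$, since $q$ lies in no corner of the $2^{j-1}$-approximate level $\cM_{j-1}$; on the other hand $q\in r_{i,j}$ implies that all $k$ of these points lie in $D_{i,j}$, while $r_{i,j}$ contains at most $2\cdot 2^j$ points of $P$, so $m_{i,j}=\Theta(k)$. Hence the additive error $O(m_{i,j}^{c'})$ of the value returned by $D_{i,j}$ is $O(k^{c'})$, which is at most $k^c$ for $c'$ chosen sufficiently small relative to $c$. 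The query time is $O((\log\log n)^3)$ for the level search plus $O(\log\log n)$ for the query inside $D_{i,j}$. For the space, summing over the corners of a fixed level and then over the levels gives $\sum_j \frac{n}{2^j}\cdot O\!\left((2^j)^{1-\eps}\right)=\sum_j O\!\left(n\cdot 2^{-\eps j}\right)=O(n)$, since $\eps>0$ is a fixed constant; together with the $O(n)$ space of the levels this yields an $O(n)$-space structure for approximate dominance counting with additive error $k^c$ and query time $O((\log\log n)^3)$.

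Finally, the structure for general three-dimensional approximate range counting is obtained from the dominance structure by the same reduction used in the proofs of Theorems~\ref{theor:domin2d2} and~\ref{theor:approx3d}, namely the standard technique from range reporting~\cite{CG86b,SR95} (see also the proof of Theorem~\ref{theor:2dgener}); carried out carefully this raises the space to $O(n\log^4 n)$ and leaves the query time at $O((\log\log n)^3)$. I expect the bookkeeping of this last step to be the main obstacle: one must check that the sublinear per-corner space of Lemma~\ref{lemma:adn4} survives both the geometric summation over the $\log n$ levels and the polylogarithmic blow-up of the dimension-reduction step, and that the fixed-exponent error $O(m^{c'})$ of Lemma~\ref{lemma:adn4} still composes to the target $k^c$ without the flexibility of the ``any $\rho$'' guarantee exploited in Theorem~\ref{theor:approx3d}; everything else is word-for-word the argument of Theorem~\ref{theor:approx3d}.
\end{proofsk}
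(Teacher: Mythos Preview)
Your proposal is correct and follows essentially the same approach as the paper's proof sketch: build the $2^i$-approximate levels as in Theorem~\ref{theor:approx3d}, replace the per-corner structure by that of Lemma~\ref{lemma:adn4}, sum the $O((2^j)^{1-\eps})$ costs over $O(n/2^j)$ corners and $\log n$ levels to get $O(n)$, and finish with the standard reduction of~\cite{CG86b,SR95}. Your additional care with the choice $c'<c$ and the explicit $m_{i,j}=\Theta(k)$ argument merely spells out details the paper leaves implicit.
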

 \begin{proofsk}
 As in the proof of Theorem~\ref{theor:approx3d} our data structure 
 consists of $2^i$-approximate levels $\cM_i$ for $i=1,\ldots, \log n$. 
 For every inward corner $r_{i,j}\in \cM_j$, we store all points dominated 
 by $r_{i,j}$ in the data structure $D_{i,j}$  described 
 in Lemma~\ref{lemma:adn4}.  
 Each $D_{i,j}$ uses $O(2^{(1-\eps)j})$ space. Since a $2^j$-approximate level 
  $\cM_{j}$ has 
 $O(\frac{n}{2^j})$  inward corners, all $\cM_j$ use 
 $O(\sum_j \frac{n}{2^{\eps\cdot j}})=O(n)$ space. 

 Dominance counting queries are answered in the same way as in 
 Theorem~\ref{theor:approx3d}. 
  We can extend the result for dominance counting to the general 
 three-dimensional counting by applying the standard technique from range 
 reporting~\cite{CG86b,SR95} that was  also used in proofs of Theorems~\ref{theor:2dgener}, \ref{theor:approx3d}, \ref{theor:2dlin}.
\end{proofsk}

\section*{Acknowledgment}
We would like to thank an anonymous reviewer of the previous version 
of this paper for stimulating  suggestions 
that helped us  improve some  of our results.

 \section*{Appendix A. Proof of Theorem~\ref{theor:2dgener} }
 We use the well known technique  used for range reporting queries~\cite{CG86b,SR95}. 
 The set of points $P$ is  subdivided into subsets $P_1, 
 P_2,\ldots, P_s$,  
 so that the total number of points in 
 $P_1\cup\ldots\cup P_s$ is $O(n\log^2 n)$, and an arbitrary query rectangle 
 $Q$ 
 can be represented as a union of  at most four rectangles $Q_1,\ldots,Q_s$, $s\leq 4$,
 so that $Q\cap P= (Q_1\cap P_{i_1})\cup\ldots\cup (Q_s\cap P_{i_s})$ 
 and each $Q_i$ is a product of two half-open intervals.  
 We store the date structure for approximate dominance queries of 
 Theorem~\ref{theor:domin2d1} for each set 
 $P_i$, so that the total space usage is $O(n\log^2 n)$. 
 Given a query $Q$, we can decompose $Q$ into $Q_1,\ldots, Q_s$ 
 and find the corresponding $P_{i_1},\ldots,P_{i_s}$ in $O(\log \log n)$ 
 time, see e.g.~\cite{N07}. Then, we can estimate the number of points in 
 each $P_{i_j}\cap Q_j$, $1\leq j\leq s$, and thus estimate the number of points in 
 $Q\cap P= (Q_1\cap P_{i_1})\cup\ldots\cup (Q_s\cap P_{i_s})$

\end{document}